\newtheorem{lemma}{Lemma}
\newtheorem{theorem}{Theorem}
\newtheorem{corollary}{Corollary}
\newcommand\stacksign[2]{%
	\mathrel{\stackunder[2pt]{\stackon[4pt]{$\gtrless$}{$\scriptscriptstyle#1$}}{%
			$\scriptscriptstyle#2$}}}
\begin{document}

\title{Ultra-Reliable Cooperative Short-Packet Communications with Wireless Energy Transfer}
\author{
	\IEEEauthorblockN{	Onel L. Alcaraz López,
						Evelio Martín García Fernández, 
						Richard Demo Souza, and 
						Hirley Alves
					}
	\thanks{O. L. A. López and H. Alves are with Centre for Wireless Communications, University of Oulu, Finland. \{onel.alcarazlopez, hirley.alves\}@oulu.fi.}
	\thanks{E. M. G. Fernández is with Federal University of Paraná (UFPR), Curitiba, Brazil. evelio@ufpr.br.}
	\thanks{R. D. Souza is with Federal University of Santa Catarina (UFSC), Florianopolis, Brazil. richard.demo@ufsc.br.}
	\thanks{This work is partially supported by Academy of Finland (Grant n.303532 and n.307492) and Capes/CNPq (Brazil).}
}
\maketitle

\begin{abstract}
We analyze a cooperative wireless communication system with finite block length and finite battery energy, under quasi-static Rayleigh fading. Source and relay nodes are powered by a wireless energy transfer (WET) process, while using the harvested energy to feed their circuits, send pilot signals to estimate channels at receivers, and for wireless information transmission (WIT). Other power consumption sources beyond data transmission power are considered. The error probability is investigated under perfect/imperfect channel state information (CSI), while reaching accurate closed-form approximations in ideal direct communication system setups. We consider ultra-reliable communication (URC) scenarios under discussion for the next fifth-generation (5G) of wireless systems.
The numerical results show the existence of an optimum pilot transmit power for channel estimation, which increases with the harvested energy.
We also show the importance of cooperation, even taking into account the multiplexing loss, in order to meet the error and latency constraints of the URC systems.
\end{abstract}

\section{Introduction}\label{int}
\subsection{Motivation}
The 5G concept goes beyond broadband connectivity by bringing new kinds of services to life, enabling mission-critical control through ultra-reliable, low-latency links and connecting a massive number of smart devices, enabling the Internet of Things (IoT) \cite{Campbell2017}. It will not only interconnect people, but also interconnect and control a massive number of machines, objects, and devices, with applications having stringent requirements on latency and reliability. In fact, a novel operation mode under discussion for 5G is Ultra-Reliable Communication over a Short Term (URC-S)\cite{Popovski.2014,Johansson.2015,Yilmaz.2015}, which focuses on scenarios with stringent latency, e.g., $\le 10$ms and error probability, e.g., $10^{-3}$, requirements \cite{Popovski.2014}. 

Powering and keeping uninterrupted operation of such massive number of nodes is a major challenge \cite{Zanella.2014}, which could be addressed through the WET technique. WET constitutes an attractive solution because of the coverage advantages of radio-frequency (RF) signals, specially for IoT scenarios where replacing or recharging batteries require high cost and/or can be inconvenient or hazardous (e.g., in toxic environments), or highly undesirable (e.g., for sensors embedded in building structures or inside the human body) \cite{Zhang.2013,Kilinc.2015,Ejaz.2016}. Also, RF signals can carry both energy and information, which enables energy constrained nodes to harvest energy and receive information \cite{Varshney.2008,Grover.2010}, allowing to prolong their lifetime almost indefinitely.

The most important characteristics of WET systems are  \cite{Makki.2016}: i) power consumption of the nodes on the order of $\SI{}{\micro\watt}$; ii) strict requirements on the reliability of the energy supply and of the data transfer; iii) information is conveyed in short packets due to intrinsically small data payloads, low-latency requirements, and/or lack of energy resources to support longer transmissions \cite{Khan.2016}. This agrees well with several URC-S scenarios with stringent latency requirements. Even though, most of the work in the field is made under the ideal assumption of communicating with large enough blocks in order to invoke Shannon theoretic arguments to address error performance, which contradicts the third characteristic of the WET systems. In fact, although performance metrics like Shannon capacity, and its extension to non-ergodic channels, have been proven useful to design current wireless systems since the delay constraints are typically above 10~ms \cite{Devassy.2014}, they are not necessarily appropriate in a short-packet scenario \cite{Durisi.2015}. 
Also, for short-packet scenarios where the sizes of the preamble, the metadata and the data of the frame structure, are of the same order of magnitude, the Shannon capacity becomes an inaccurate metric for assessing the necessary blocklength to achieve a certain reliability. Instead, an essential quantity is the maximum coding rate for which \cite{Polyanskiy.2010} (and references therein) developed nonasymptotic bounds and approximations.

Our goal in this paper is to investigate wireless-powered communication with short packets, which is of great interest for URC-S use cases in 5G systems.  We analyze a cooperative setup, where source and relay nodes are powered by a WET process from the destination, and practical issues such as the finite battery energy, other power consumption sources beyond data transmission, and imperfect CSI, are taken into account.
\begin{table*}[!t]
	\centering	
	\label{symbols}
	\caption{Summary of Main Symbols}
	\begin{tabular}{|ll|ll|}
		\hline        		
		\textbf{Symbol}& \textbf{Definition}&\textbf{Symbol}& \textbf{Definition}	\\ \hline
		$S,R,D$ & Source, Relay and Destination nodes, respectively				& $B_{_{\mathrm{max}}}$ & Battery capacity of $S$ and $R$	\\
		$v$ & WET blocklength	&
		$u$ & Pilot signal blocklength	\\
		$n$& Information blocklength in the DC scheme &
		$n^*$& Optimum information blocklength in the DC scheme \\
		$n_1,n_2$& WIT blocklength in the 1st, 2nd phase with relaying &
		$\nu$& Number of data and pilot symbols in the relaying scheme\\
		$k$ & Message length in bits &
		$h_{ij}$ & Normalized channel in $i\rightarrow j$, $i,j\in\{S,R,D\}$\\
		$g_{ij}$ & Power gain coefficients in $i\rightarrow j$, $i,j\in\{S,R,D\}$&
		$P_{\mathrm{csi}}$ & Pilot signal transmit power for both $S$ and $R$\\
		$T_c$ & Duration of each channel use&
		$E_{i}$ & Energy harvested at $i\in\{S,R\}$\\
		$B_{i}$ & Battery charge in $i\in\{S,R\}$  before transmitting&
		$\eta$ & Energy conversion efficiency\\
		$d_{ij}$ & Distance of the link $i\rightarrow j$, $i,j\in\{S,R,D\}$&
		$\alpha$ & Path loss exponent\\
		$\kappa_{ij}$ & Path loss factor in $i\rightarrow j$, $i,j\in\{S,R,D\}$&
		$P_{i}$ & Transmit power of $i\in\{S,R,D\}$\\
		$\lambda_{i}$ & Threshold for the battery saturation in $i\in\{S,R\}$&
		$P_c$ & Circuit and baseband processing power consumption\\
		$E^t_{i}$ & Available energy for transmission at $i\in\{S,R\}$&
		$x_{i}$ & Signal transmitted by $i\in\{S,R\}$\\
		$\omega_{j}$ & Gaussian noise vector at $j\in\{D,R\}$&
		$\sigma^2_{j}$ & Variance of $\omega_{j}$, $j\in\{D,R\}$\\
		$y_{_D}$ & Received signal at $D$ from $S$ in the DC scheme &
		$y_{_{R}}$ & Received signal at $R$ \\
		$y_{_{D1}}$, $y_{_{D2}}$ & Received signal at $D$ from $S$ and $R$ with relaying &
		$\gamma$ & Instantaneous SNR in $S\rightarrow D$ in the DC scheme\\
		$\gamma_{_{D1}},\gamma_{_{D2}}$ & Instantaneous SNR in $S\rightarrow D$, $R\rightarrow D$ with relaying&
		$\gamma_{_{R}}$ & Instantaneous SNR in $S\rightarrow R$\\
		$\delta$, $\delta^*$ & Delay and minimum delay&
		$\beta$, $\beta^*$ & Time sharing parameter and the optimum one\\
		$p_{\mathrm{out}}$ & Outage probability&
		$p_i$ &  Energy insufficiency probability at $i\in\{S,R\}$ \\
		$\epsilon$ &  Error probability&        
		$C(\gamma)$ & Shannon capacity for a SNR equal to $\gamma$\\
		$V(\gamma)$ & Shannon dispersion for a SNR equal to $\gamma$&
		$\hat{h}_{ij}$, $\bar{h}$ & Estimate of $h_{ij}$ and error in the estimation\\
		$x^p_i$, $y^p_j$ & Pilot symbols transmitted/received by $i/j$&
		$\gamma^{\mathrm{imp}}$ & Instantaneous SNR at the receiver with imperfect CSI\\
		$\varepsilon_{_{0}}$ & Target error probability&
		$\delta_{_{0}}$ & Maximum allowable delay\\
		$\xi$ & Error metric for the approximations in \eqref{e_fin} and \eqref{e_inf} &
		$i_{_M}+1$ & Number of terms for summation in \eqref{e_fin}\\
		\hline
	\end{tabular}
\end{table*}
\subsection{Related Work}
The most common WET techniques are based on time-switching or power splitting \cite{Zhou.2013}. Authors in \cite{Nasir.2013} propose two WET protocols called Time Switching-based Relaying (TSR) and Power Splitting-based Relaying (PSR) to be implemented by an energy constrained relay node while assisting the communications of a single link. The implementation of the TSR protocol is further analyzed later in \cite{Nasir.2015}.  
In all cases, there is not a direct link between source and destination, while only the relay is energy constrained and powered by the source. 
Alternatively, authors in \cite{Moritz.2014_1} consider a system where energy-constrained sources have independent information to transmit to a common destination, which is responsible for transferring energy wirelessly to the sources. The source nodes may cooperate with each other, under either decode-and-forward (DF) or network coding-based protocols, and even though the achievable diversity order is reduced due to wireless energy transfer process, it is very close to the one achieved for a network without energy constraints.
Those results are extended in \cite{Moritz.2014_2} by considering a more realistic consumption model where the circuitry power consumption  is taken into account. In \cite{Chen.2015}, both relay and source are powered by a WET process and the protocols: Harvest-then-Transmit (HTT) for a direct communication scenario, and Harvest-then-Cooperate (HTC) for relaying scenarios, are evaluated. However, all these works are under the ideal assumption of communicating with infinite blocklength. 

Wireless-powered communication networks at finite blocklength regime have received attention in the scientific community recently. Authors of \cite{Haghifam.2016} attain tight approximations for the outage probability/throughput in an amplify-and-forward relaying scenario, while in \cite{Makki.2016} the authors implement retransmission protocols, in both energy and information transmission phases, to reduce the outage probability compared to open-loop communications. The impact of the number of channel uses for WET and for WIT on the performance of a system where a node charged by a power beacon attempts to communicate with a receiver, is investigated in \cite{Khan.2016}. Also, subblock energy-constrained codes are investigated in \cite{Tandon.2016}, while authors provide a sufficient condition on the subblock length to avoid energy outage at the receiver.
In \cite{Lopez2.2017} we optimize a single-hop wireless system with WET in the downlink and WIT in the uplink, under quasi-static Nakagami-m fading in URC-S scenarios.
The impact of a non-cooperative dual-hop setup is further evaluated in \cite{Lopez.2017}.   
Finally and as an extension of \cite{Lopez2.2017}, the error probability and energy consumption at finite block length and finite battery energy are characterized in \cite{Lopez3_2017} for scenarios with/without energy accumulation between transmission rounds with transmit power control.

The only source of energy consumption in all above works is the transmission power, which is a very impractical simplification for energy-limited systems.
Also, all these works consider perfect CSI acquisition, which is a common situation in current scientific literature. However, the analysis under perfect CSI could be misleading for a wireless-powered communication network, due to its inherent energy constraints, and even more on systems with limited delay. Only few works have given a step forward on this direction in order to characterize different scenarios. Particularly interesting is the work in \cite{Schiessl.2016} where authors attain a closed-form approximation for the error probability at finite blocklength and imperfect CSI while showing the convenience of adapting the training sequence length and the transmission rate to reach higher reliability. However, new scenarios require deeper understanding, and battery capacity of devices and other energy consumption sources beyond transmission must be taken into account for a more realistic analysis.
\subsection{Contributions}
This paper aims at filling the above gaps in the state of the art. Next we list the main contributions of this work:
\begin{itemize}		
	\item  We model an URC-S system under Rayleigh quasi-static fading at finite blocklength and finite battery constraint with other power consumption sources beyond data transmission power. We show that the infinite battery assumption is permissible for the scenarios under discussion since the energy harvested is very low; but considering other power consumption sources, e.g., circuit and baseband processing power, is crucial because they constitute a non-negligible cause of outage.	
	\item  We analyze the impact of imperfect CSI acquisition from pilot signals on the system performance. Numerical results show the existence of an optimum pilot transmit power for channel estimation, which increases with the harvested energy and decreases with the information blocklength. We show that the energy devoted to the CSI acquisition, which depends on the power and utilized time, has to be taken into account due to the inherent energy and delay constraints of the discussed scenarios. 	
	\item For ideal direct communication, we reach accurate closed-form approximations for the error probability  for finite and infinite battery devices.
	\item Also, cooperation appears as a viable solution to meet the reliability and delay constraints of URC-S scenarios, and it is advisable transmitting with shorter blocklengths during the broadcast phase. 
\end{itemize}
The rest of this paper is organized as follows. Section \ref{system} presents the system model. Sections \ref{II} and \ref{III} discuss the delay and error probability metrics for scenarios with/without perfect CSI acquisition, respectively. Section \ref{results} presents some numerical results, while Section \ref{conclusions} concludes the paper.

\noindent{\textbf{Notation:}} Let $\mathbb{P}[A]$ denote the probability of event $A$, while $X\sim\mathrm{Exp}(1)$ is a normalized exponentially distributed random variable with Probability Density Function (PDF) $f_X(x)=e^{-x}$ and Cumulative Distribution Function (CDF) $F_X(x)=1-e^{-x}$. Then, $Y\sim\mathcal{CN}(0,1)$ is a zero-mean circularly symmetric complex random variable with unit variance. Let $\mathbb{E}[\!\ \cdot\ \!]$ denote expectation and $|\cdot|$ is the absolute value operator. The Gaussian Q-function is denoted as $Q(x)= \tfrac{1}{2} \operatorname{erfc}\left(\tfrac{x}{\sqrt{2}}\right) = \int_{x}^{\infty}\frac{1}{\sqrt{2\pi}}e^{-t^2/2}\mathrm{d}t$ \cite[\S 8.250-4]{Jeffrey.2007},  $\mathrm{Ei}(i,j)=\int_{1}^{\infty}\frac{e^{-jt}}{t^i}\mathrm{d}t$ \cite[\S 8.21]{Jeffrey.2007} is the exponential integral and $\operatorname{K}_t(\mathrel{\cdot})$  is the modified Bessel function of second kind and order $t$ \cite[\S 8.407]{Jeffrey.2007}. Finally, $\min(x,y)$ represents the minimum value between $x$ and $y$. The main symbols along the paper and their meaning are summarized in TABLE~I.
\section{System Model}\label{system}
We consider a dual-hop cooperative network where a DF relay node, $R$, is available for assisting the transmissions from the source, $S$, to the destination, $D$. All the nodes are single antenna, half-duplex devices, where $D$ is assumed to be externally powered and acts as an interrogator, e.g., an access point or a base station, requesting information from $S$, which along with $R$, may be seen as sensor nodes with very limited energy supply and finite battery of capacity $B_{_{\mathrm{max}}}$. The communication scheme is illustrated in Fig.~\ref{Fig1}. 
First, $D$ powers $S$ and $R$ during $v$ channel uses through a WET process. Right after, $S$ broadcasts a known pilot signal of $u_1$ channel uses before sending its data during a WIT phase over the next $n_1$ channel uses. The pilot is used by the receivers, $D$ and $R$, to acquire CSI. Then, $R$ tries to decode the information from $S$ and if succeeds, transmits a pilot signal of $u_2$ channel uses to $D$ for CSI acquisition and then forwards information received during the last WIT phase over the next $n_2$ channel uses. Finally, $D$ combines the received information from $S$ and $R$ to decode the message. At each WIT phase, $k$ information bits constitute the transmitted message. 
We assume $u=u_1=u_2$, and notice that an equivalent direct communication (DC) scheme can be easily obtained from Fig.~\ref{Fig1} without the assistance of $R$, with only one pilot signal transmission ($u$ channel uses) and one WIT ($n$ channel uses) phase. 
\begin{figure}[t!] 
	\centering
	\begin{subfigure}[b]{0.45\textwidth}
		\centering
		\includegraphics[width=1.05\textwidth]{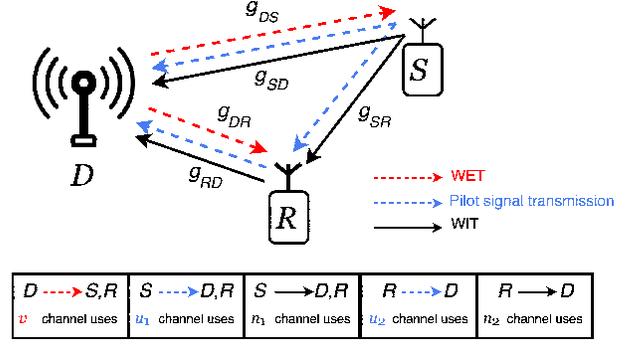}		
	\end{subfigure}    
	\caption{\small Relaying communication scheme: $D$ powers $S$ and $R$ using $v$ channel uses, then $S$ broadcasts the pilots in $u_1$ channel uses and the information into $n_1$ channel uses, while $R$ repeats the processes using, respectively, $u_2$ and  $n_2$ channel uses.	}
	\label{Fig1}	
	\vspace{-2mm}
\end{figure} 

We assume Rayleigh quasi-static fading channels, where the fading process is constant over a transmission round, which spans over all the phases of the communication scheme shown in Fig.~\ref{Fig1} ($v+u_1+u_2+n_1+n_2$ channel uses), and it is independent and identically distributed from round to round. The normalized channel gains from node $i$ to node $j$, where $i,j\in\{S,D,R\}$ and $i\ne j$, are denoted by $h_{ij}$, while $g_{ij}=h_{ij}^2\sim\mathrm{Exp}(1)$ is the power gain coefficient.
Similarly, distance between nodes is denoted by $d_{ij}$. In addition, no energy accumulation is allowed between consecutive transmission rounds, thus all the harvested energy at $S$ and $R$ is used for powering their circuits and for pilot and data transmission.
The pilot signal transmit power is assumed fixed and equal to $P_{\mathrm{csi}}$ for both $S$ and $R$. Also, the duration of each channel use is denoted by $T_c$.

\section{System Performance with Perfect CSI}\label{II}
First, as a benchmark, we assume that regardless the resources used for CSI acquisition, $u$ and $P_{\mathrm{csi}}$, receivers get a perfect channel estimation (pCSI). We note that scenarios with pCSI are quite frequent in the literature, e.g., \cite{Nasir.2013,Nasir.2015,Chen.2015,Haghifam.2016,Makki.2016,Khan.2016,Tandon.2016,Lopez2.2017,Lopez.2017}. Moreover, we start analyzing the outage probability for the DC scheme and then we extend the results for the cooperative scheme.

\subsection{DC scheme}\label{ssc:dcschme}
The energy harvested at $S$ in the WET phase, which lasts for $vT_c$, is \cite{Chen.2015}
\begin{align}
E_{_{S}}&=\frac{\eta P_{_{D}} g_{_{DS}}}{\kappa_{_{DS}}}vT_c, \label{E_s}
\end{align}
and the charging state of the battery before transmitting is
\begin{align}
B_{_{S}}&=\min(E_{_{S}},B_{_{\mathrm{max}}})=\left\{ \begin{array}{ll}
\dfrac{vT_c\eta P_{_{D}} g_{_{DS}}}{\kappa_{_{DS}}},& \mathrm{if}\ g_{_{DS}}<\lambda_{_S}\\
B_{\mathrm{max}}, & \mathrm{if}\ g_{_{DS}}\ge\lambda_{_S}
\end{array}\right.\nonumber\\
&=\dfrac{vT_c\eta P_{_{D}} \min(g_{_{DS}},\lambda_{_S})}{\kappa_{_{DS}}},
\label{charge_S}
\end{align}
since no energy is accumulated between transmission rounds and because of the finite battery at $S$, it is not allowed to store more energy than the allowable limit $B_{_{\mathrm{max}}}$. $P_{_{D}}$ is the transmit power of $D$ when wirelessly powering $S$ (and $R$), $0<\eta<1$ is the energy conversion efficiency and $\kappa_{ij}=\kappa d_{ij}^{\alpha}$ accounts for the path loss in the $i\rightarrow j$ link, where $\alpha$ is the path loss exponent, and $\kappa$ represents a combination of other losses as due to carrier frequency and the antenna gains \cite{Goldsmith.2005}. Also, $\lambda_{_S}$ is the $D\rightarrow S$ channel power gain threshold for the saturation of the battery in $S$, which is given by
\begin{align}
	\lambda_{_S}= \frac{B_{_{\mathrm{max}}}\,\kappa_{_{DS}}}{v \, T_c \, \eta P_{_{D}}}.
\end{align}
In addition, we assume that $P_{_D}$ is sufficiently large such that the energy harvested from noise is negligible.  Notice that \eqref{charge_S} is a valid expression if and only if all the energy harvested by $S$ at each round is completely used during the pilot signal transmission and WIT phases of the same round. We acknowledge that energy accumulation between transmission rounds with adequate power allocation strategies could improve the system performance, however that scenario is out of the scope of this work.

Right after the WET phase, $S$ transmits a pilot signal of $u$ channel uses and power $P_{\mathrm{csi}}$ to be used by $D$ for CSI acquisition. Also, we here take into account the circuit and baseband processing power consumption, namely $P_c$, which is assumed to be constant without loss of generality. When the harvested energy is insufficient for pilot transmission,  WIT and other consumption sources, there is an outage and transmission fails, otherwise all the remaining energy is used for those processes. Notice that an outage event happens when the transmitter has not the sufficient power to feed its circuits and send the estimating pilot or when there is data transmission but the receiver is unable to successfully decode the received message. Thus, the available energy for transmission at $S$ and its transmit power are stated, respectively, as
\begin{align}
E_{_{S}}^t&=B_{_{S}}-P_{\mathrm{csi}}uT_c-P_c(u+n)T_c, \label{E_st}\\
P_{_{S}}&=\frac{E_{_{S}}^t}{nT_c}\!
\stackrel{(a)}{=}\!\frac{v \eta P_{_{D}}\min(g_{_{DS}},\lambda_{_S})}{n\kappa_{_{DS}}}\!\!-\frac{u}{n}(P_{\mathrm{csi}}\!+\!P_c)\!-\!P_c,\label{P_s}
\end{align}
where $(a)$ in \eqref{P_s} comes from substituting \eqref{charge_S} into \eqref{E_st}. In addition, the received signal at $D$ is
\begin{align}
y_{_{D}}&=\!\sqrt{\frac{P_{_{S}}g_{_{SD}}}{\kappa_{_{SD}}}}x_{_{S}}+\omega_{_{D}},\label{yDs}
\end{align}
where $x_{_{S}}$ is the codebook transmitted by $S$, which is assumed Gaussian with zero-mean and unit-variance, $\mathbb{E}[|x_{_S}|^2]=1$. Notice that there is no guarantee that Gaussian codebooks will provide the best reliability performance at finite blocklength, and here they are merely used to gain in mathematical tractability (authors in \cite{Polyanskiy.2010} derive accurate performance approximations). 
$\omega_{_{D}}$ is the Gaussian noise vector at $D$ with variance $\sigma_{_D}^2$. 
Thus, by using \eqref{yDs} and \eqref{P_s}, we can write the instantaneous SNR at $D$ as
\begin{align}
\gamma&=\frac{v \eta P_{_{D}}g_{_{SD}}\!\min(g_{_{DS}},\lambda_{_S})}{n\kappa_{_{DS}}^{2}\sigma_{_D}^2}\!\!-\frac{g_{_{SD}}}{\kappa_{_{DS}}\sigma_{_D}^2 }\!\bigg[\frac{u}{n}(P_{\mathrm{csi}}\!+\!P_c)\!+\!P_c\bigg]\nonumber\\
&=\phi g_{_{SD}}\min(g_{_{DS}},\lambda_{_S})-\varphi g_{_{SD}},\label{gamma}
\end{align}
where
\begin{align}
\phi&=\frac{v \eta P_{_{D}}}{n\kappa_{_{DS}}^{2}\sigma_{_D}^2},\label{a}\\
\varphi&=\frac{1}{\kappa_{_{DS}}\sigma_{_D}^2 }\!\bigg[\frac{u}{n}(P_{\mathrm{csi}}\!+\!P_c)\!+\!P_c\bigg].\label{b}
\end{align}
Let $\delta$ be the delay in delivering a message of $k$ bits, while $\delta^*$ is the minimum delay that satisfies a given reliability constraint. Moreover, $\beta$ is the time sharing parameter representing the fraction of $\delta$ devoted to WET only. Therefore,
\begin{align}
\delta&=v+u+n, \label{delta}\\
\beta&=v/\delta. \label{nu}
\end{align}
Notice that $\delta$ is measured in channel uses, while $\delta T_c$ would be the delay in seconds. Finally, we define the optimum WIT blocklength, in the sense of minimizing $\delta^*$, as $n^*$. Both $\delta^*$ and $n^*$ are numerically investigated in Section \ref{results}.

On the other hand, the reliability analysis comes from evaluating the outage probability. An outage event may be due to a low harvested energy, precluding the whole WIT process, or if after the WIT process there is a decoding error at $D$. Hence, the outage probability is given as follows
\begin{equation}
p_{\mathrm{out}}=p_{_S}+\mathbb{E}[\epsilon(\gamma,k,n)],\label{pout}
\end{equation}
where $p_{_S}$ is the probability that the harvested energy at $S$ is insufficient simultaneously for channel estimation and for satisfying other consumption requirements at $S$, and it is obtained through
\begin{align}\label{ps}
p_{_S}&=\mathbb{P}[B_{_{S}}<P_{\mathrm{csi}}uT_c+P_c(u+n)T_c]\nonumber\\
&=\mathbb{P}\bigg[\dfrac{vT_c\eta P_{_{D}} \min(g_{_{DS}},\lambda_{_S})}{\kappa_{_{DS}}}<P_{\mathrm{csi}}uT_c+P_c(u+n)T_c\bigg]\nonumber\\
&=\mathbb{P}\bigg[\min(g_{_{DS}},\lambda_{_S})<\frac{\kappa_{_{DS}} }{v\eta P_{_{D}}}\Big(P_{\mathrm{csi}}u+P_c(u+n)\Big)\bigg]\nonumber\\
&\!\stackrel{(a)}{=}\!\mathbb{P}\bigg[\min(g_{_{DS}},\lambda_{_S})\!<\!\frac{\varphi}{\phi}\bigg]\!=\!1\!-\!\mathbb{P}\bigg[g_{_{DS}}\!\ge\!\frac{\varphi}{\phi}\bigg]\mathbb{P}\bigg[\lambda_{_S}\!\ge\!\frac{\varphi}{\phi}\bigg]\nonumber\\
&\stackrel{(b)}{=}\!1\!-\!\mathbb{P}\bigg[g_{_{DS}}\!\ge\!\frac{\varphi}{\phi}\bigg]=\mathbb{P}\bigg[g_{_{DS}}\!<\!\frac{\varphi}{\phi}\bigg]\nonumber\\
&\stackrel{(c)}{=}1-\exp\left({-\frac{\varphi}{\phi}}\right).
\end{align}
Notice that $(a)$ comes from using the definitions of $\phi$ and $\varphi$ given in \eqref{a} and \eqref{b}, respectively, while $(b)$ comes from the fact that $\lambda_{_S}$ is not a random variable and it should be greater than $\frac{\kappa_{_{DS}}}{v\eta P_{_{D}}}\big(P_{\mathrm{csi}}u+P_c(u+n)\big)$ for every practical system since otherwise the system is in outage all the time. Equality in $(c)$ is obtained by using the CDF expression of $g_{_{DS}}$. Also, $\epsilon(\gamma,k,n)$ is the error probability when transmitting a message of $k$ information bits over $n$ channel uses and being received with SNR equal to $\gamma$ at the destination, thus $\mathbb{E}[\epsilon(\gamma,k,n)]$ is the average probability. Both terms are accurately approximated by \eqref{Q} \cite{Polyanskiy.2010}, and \eqref{error} for quasi-static fading channel \cite{Yang.2014j}, when $n\ge100$,\footnote{See \cite[Figs.~12 and 13]{Polyanskiy.2010} and related analyses for more insight on the accuracy of \eqref{Q}. Authors show that the approximate achievable rate matches almost perfectly its true value for $n\ge 100$. Many other works, such as \cite{Makki.2016,Khan.2016,Devassy.2014,Haghifam.2016,Lopez.2017,Lopez2.2017,Lopez3_2017,Schiessl.2016,Parisa.2017,Hu.2016_2,Makki.2014}, use this accurate approximation and/or \eqref{error} to gain in tractability.} as shown next    
\begin{align}
\epsilon(\gamma,k,n)&\approx Q\Biggl(\frac{C(\gamma)-r}{\sqrt{V(\gamma)/n}}\Biggl),\label{Q}\\
\mathbb{E}[\epsilon(\gamma,k,n)]&\approx \mathbb{E}\Bigg[Q\Biggl(\frac{C(\gamma)-r}{\sqrt{V(\gamma)/n}}\Biggl)\Bigg]\nonumber\\
&\approx\int_{\gamma}Q\Biggl(\frac{C(\gamma)-r}{\sqrt{V(\gamma)/n}}\Biggl)f_{\gamma}(\gamma)\mathrm{d}\gamma,\label{error}
 \end{align}
where $r = k/n$ is the source fixed transmission rate, $C(\gamma)=\log_2(1+\gamma)$ is the Shannon capacity, $V(\gamma)=\left(1-\frac{1}{(1+\gamma)^2}\right)(\log_2e)^2$ is the channel dispersion, which measures the stochastic variability of the channel relative to a deterministic channel with the same capacity \cite{Polyanskiy.2010}.
\begin{lemma}
	The average error probability is given in \eqref{e_fin} and \eqref{e_inf} for finite and infinite battery, respectively.
		\begin{align}
		\mathbb{E}_{\mathrm{fin}}[\epsilon(\gamma,k,n)]\!&\approx\!\int\limits_{0}^{\infty}\!\int\limits_{\tfrac{\varphi}{\phi}}^{\lambda_{_S}}\!q_1(g_{_{DS}},g_{_{SD}},n)e^{\!-g_{_{DS}}}\mathrm{d}g_{_{DS}}\mathrm{d}g_{_{SD}}+\nonumber\\
		&+\!\int\limits_{0}^{\infty}\!\int\limits_{\lambda_{_S}}^{\infty}\!q_1(\lambda_{_S},g_{_{SD}},n)e^{\!-g_{_{DS}}}\mathrm{d}g_{_{DS}}\!\mathrm{d}g_{_{SD}}\!,\label{e_fin}\\
		\mathbb{E}_{\mathrm{inf}}[\epsilon(\gamma,k,n)]&\!\approx\!\int\limits_{0}^{\infty}\!\int\limits_{\tfrac{\varphi}{\phi}}^{\infty}q_1(g_{_{DS}},g_{_{SD}},n)e^{\!-g_{_{DS}}}\mathrm{d}g_{_{DS}}\!\mathrm{d}g_{_{SD}}\!, \label{e_inf}
		\end{align}
		where 
		\begin{align}
		q_1(x_1,x_2,n)&=Q\Biggl(\!\frac{C((\phi x_1\!-\!\varphi)x_2)\!-\!k/n}{\sqrt{V((\phi x_1\!-\!\varphi) x_2)/n}}\Biggl)e^{-x_2}.
		\end{align}
\end{lemma}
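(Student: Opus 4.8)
The plan is to obtain the average error probability by integrating the finite-blocklength approximation \eqref{error} against the joint distribution of the two relevant channel power gains, $g_{_{DS}}$ and $g_{_{SD}}$. From \eqref{gamma} the instantaneous SNR is a deterministic function $\gamma=\phi g_{_{SD}}\min(g_{_{DS}},\lambda_{_S})-\varphi g_{_{SD}} = (\phi\min(g_{_{DS}},\lambda_{_S})-\varphi)g_{_{SD}}$ of these two gains, which are independent with $g_{_{DS}},g_{_{SD}}\sim\mathrm{Exp}(1)$. The first step is to recall that the outage decomposition \eqref{pout} already accounts for the energy-insufficiency event $\{B_{_S}<P_{\mathrm{csi}}uT_c+P_c(u+n)T_c\}$ separately via $p_{_S}$, so the term $\mathbb{E}[\epsilon(\gamma,k,n)]$ should be understood as the expectation of the decoding error over the complementary event where transmission actually takes place, i.e.\ where $g_{_{DS}}\ge\varphi/\phi$ (equivalently $\gamma>0$). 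This is exactly why the inner integrals in \eqref{e_fin} and \eqref{e_inf} start at $\varphi/\phi$ rather than at $0$.

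Next I would split according to whether the battery saturates. Using \eqref{charge_S}, when $g_{_{DS}}<\lambda_{_S}$ the effective gain entering $\gamma$ is $g_{_{DS}}$ itself, whereas when $g_{_{DS}}\ge\lambda_{_S}$ it is clamped to $\lambda_{_S}$. Hence, conditioning on $g_{_{SD}}$ and integrating over $g_{_{DS}}$ with density $e^{-g_{_{DS}}}$, the region $g_{_{DS}}\in[\varphi/\phi,\lambda_{_S})$ contributes $Q\!\big((C((\phi g_{_{DS}}-\varphi)g_{_{SD}})-k/n)/\sqrt{V((\phi g_{_{DS}}-\varphi)g_{_{SD}})/n}\big)$, while the region $g_{_{DS}}\in[\lambda_{_S},\infty)$ contributes the same $Q$-expression but with $g_{_{DS}}$ replaced by $\lambda_{_S}$, which is constant in $g_{_{DS}}$ on that region. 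Writing the extra factor $e^{-g_{_{SD}}}$ from the density of $g_{_{SD}}$ into the definition of $q_1(x_1,x_2,n)$ and then integrating $x_2=g_{_{SD}}$ over $(0,\infty)$ yields precisely \eqref{e_fin}. For the infinite-battery case, $B_{_{\mathrm{max}}}\to\infty$ forces $\lambda_{_S}\to\infty$ by its definition, so the second double integral in \eqref{e_fin} vanishes (its $g_{_{DS}}$-range becomes empty) and the upper limit of the first one becomes $\infty$, giving \eqref{e_inf}.

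The only mild subtlety, and the step I expect to be the main obstacle to state cleanly, is justifying the restriction of the $g_{_{DS}}$-integration to $[\varphi/\phi,\infty)$: one must argue that on the event $g_{_{DS}}<\varphi/\phi$ we have $\gamma\le 0$ (no viable transmission), which is already captured inside $p_{_S}$ in \eqref{pout}, so that including it again in $\mathbb{E}[\epsilon(\gamma,k,n)]$ would double count. Concretely, $\gamma>0 \iff \phi\min(g_{_{DS}},\lambda_{_S})>\varphi \iff \min(g_{_{DS}},\lambda_{_S})>\varphi/\phi$, and since $\lambda_{_S}>\varphi/\phi$ holds for every practical system (this is exactly fact $(b)$ used in the derivation of \eqref{ps}), the condition reduces to $g_{_{DS}}>\varphi/\phi$. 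Everything else is a routine application of Fubini's theorem (legitimate because the integrand is nonnegative and bounded by $1$) to swap the order and package the $g_{_{SD}}$-density into $q_1$; no further approximation beyond \eqref{Q}--\eqref{error} is introduced.
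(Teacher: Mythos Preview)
Your proposal is correct and follows essentially the same approach as the paper, which simply states that \eqref{e_fin} and \eqref{e_inf} are obtained by substituting \eqref{gamma} into \eqref{error} and using the exponential PDFs of $g_{_{SD}}$ and $g_{_{DS}}$, with the infinite-battery case arising from $\gamma=\phi g_{_{SD}}g_{_{DS}}-\varphi g_{_{SD}}$. Your additional justification of the lower limit $\varphi/\phi$ via the double-counting argument with $p_{_S}$, and the observation that $\lambda_{_S}>\varphi/\phi$ reduces the positivity condition to $g_{_{DS}}>\varphi/\phi$, are useful clarifications that the paper leaves implicit.
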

\begin{proof}	
	Substituting \eqref{gamma} into \eqref{error} and using the $\mathrm{PDF}$ expression for $g_{_{SD}}$ and $g_{_{DS}}$, we attain \eqref{e_fin} and \eqref{e_inf}. Notice that for the infinite battery case, \eqref{gamma} becomes $\gamma=\phi g_{_{SD}}g_{_{DS}}-\varphi g_{_{SD}}$.
\end{proof}

We consider as ideal system, one where all the energy harvested by $S$ is used only during the WIT phase ($P_{\mathrm{csi}}=P_c=0\rightarrow\varphi=0$) while, even so, $D$ has perfect knowledge of the channel.

\begin{theorem}\label{Th1}
The outage probability given in \eqref{pout} can be approximated as in \eqref{pout_fin} and \eqref{pout_inf} (on the top of the next page) for an ideal system as described in Section~\ref{ssc:dcschme}, where $\theta=2^{k/n}-1$, $\psi=\sqrt{\frac{n}{2\pi}}(2^{2k/n}-1)^{-\tfrac{1}{2}}$, $\varrho=\theta-\tfrac{1}{\psi}\sqrt{\tfrac{\pi}{2}}$, $\vartheta=\theta+\tfrac{1}{\psi}\sqrt{\tfrac{\pi}{2}}$, for finite and infinite battery devices, respectively. $i_M$ is a parameter to limit the number of terms in the summation when evaluating \eqref{pout_fin}.
\end{theorem}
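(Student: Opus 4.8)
The plan is to start from the exact expression for the outage probability in \eqref{pout}, specialize it to the ideal system where $\varphi=0$ (so that $p_{_S}=1-e^{-\varphi/\phi}\to 0$), and then evaluate the remaining term $\mathbb{E}[\epsilon(\gamma,k,n)]$ using the integral forms supplied by Lemma~1. For the ideal system \eqref{gamma} collapses to $\gamma=\phi g_{_{SD}}\min(g_{_{DS}},\lambda_{_S})$ (finite battery) or $\gamma=\phi g_{_{SD}}g_{_{DS}}$ (infinite battery), so the outage probability is exactly the double integrals in \eqref{e_fin}, \eqref{e_inf} with $\varphi=0$. The first reduction step is therefore purely notational: set $\varphi=0$ in $q_1$ and split the finite-battery case into the unsaturated region $\{g_{_{DS}}<\lambda_{_S}\}$ and the saturated region $\{g_{_{DS}}\ge\lambda_{_S}\}$ exactly as in the lemma.

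The crux is approximating the inner $Q$-function so that the integrals become tractable in closed form. I would use the standard linearization of the finite-blocklength $Q\bigl((C(\gamma)-r)/\sqrt{V(\gamma)/n}\bigr)$ around the point $\gamma=\theta=2^{k/n}-1$ where the argument vanishes: near that point $C(\gamma)-r\approx \log_2(e)\,(\gamma-\theta)/(1+\theta)$ and $V(\gamma)\approx (1-1/(1+\theta)^2)(\log_2 e)^2$, which yields an affine argument $\psi\,(\text{SNR}-\theta)$ with $\psi=\sqrt{n/(2\pi)}\,(2^{2k/n}-1)^{-1/2}$ — this is exactly the $\psi$ in the statement. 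One then replaces the $Q$-function by the piecewise-linear ``ramp'' approximation $Q(\psi(x-\theta))\approx 1$ for $x\le\varrho$, $\approx \tfrac12-\psi(x-\theta)/\sqrt{2\pi}$ for $\varrho\le x\le\vartheta$, and $\approx 0$ for $x\ge\vartheta$, where $\varrho=\theta-\sqrt{\pi/2}/\psi$ and $\vartheta=\theta+\sqrt{\pi/2}/\psi$ are precisely the breakpoints named in the theorem. This is the same device used in much of the finite-blocklength literature (and consistent with the footnote's references); it reduces each double integral to integrals of $e^{-g_{_{DS}}-g_{_{SD}}}$, and of $g_{_{SD}}g_{_{DS}}e^{-g_{_{DS}}-g_{_{SD}}}$-type terms, over regions determined by the inequalities $\phi g_{_{SD}}g_{_{DS}}\lessgtr\varrho,\vartheta$ (resp.\ with $\min(g_{_{DS}},\lambda_{_S})$).

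Carrying out the $g_{_{DS}}$ integral first with $g_{_{SD}}$ fixed, the hyperbolic regions $g_{_{DS}}\lessgtr \varrho/(\phi g_{_{SD}})$ produce incomplete exponential integrals, and after the subsequent $g_{_{SD}}$-integration these collapse to modified Bessel functions $\operatorname{K}_t(\cdot)$ of the quantities $2\sqrt{\varrho/\phi}$ and $2\sqrt{\vartheta/\phi}$; the linear middle piece contributes the $\psi$-weighted term. For the finite-battery case the saturation cutoff $\lambda_{_S}$ truncates the $g_{_{DS}}$-range, so the plain exponential integral $\mathrm{Ei}(i,j)$ of argument proportional to $\lambda_{_S}$ appears, and the infinite series in $i$ (truncated at $i_M$) comes from expanding one of the resulting $\int g_{_{SD}}^{-i}e^{-g_{_{SD}}}$-type terms — hence the role of $i_M$ as a truncation parameter. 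For the infinite-battery case $\lambda_{_S}\to\infty$, the $\mathrm{Ei}$ and series terms drop, and one is left with a clean combination of Bessel-$\operatorname{K}$ functions, giving \eqref{pout_inf}.

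The main obstacle I anticipate is bookkeeping rather than a conceptual difficulty: keeping track of which region of the $(g_{_{DS}},g_{_{SD}})$ plane each piece of the ramp approximation lives in once the hyperbola $\phi g_{_{SD}}g_{_{DS}}=\varrho$ or $\vartheta$ interacts with the saturation line $g_{_{DS}}=\lambda_{_S}$ — for certain parameter values the hyperbola may or may not intersect $\{g_{_{DS}}=\lambda_{_S}\}$ within the first quadrant, and one must check that the ordering $\varrho<\vartheta$ together with $\varphi/\phi$ (here $0$) being below the relevant thresholds holds, so that no region is spuriously empty or double-counted. The other delicate point is justifying the interchange of summation and integration and the truncation at $i_M$, i.e.\ showing the tail of the $i$-series is negligible; this is exactly what the error metric $\xi$ and the footnote discussion are meant to quantify, so I would state it as a controlled approximation rather than an exact identity, consistent with the ``$\approx$'' already in \eqref{pout_fin}--\eqref{pout_inf}.
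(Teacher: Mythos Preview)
Your central step --- replacing $Q\bigl((C(\gamma)-r)/\sqrt{V(\gamma)/n}\bigr)$ by the piecewise-linear ramp $\Omega(\gamma)$ with breakpoints $\varrho,\vartheta$ --- is exactly the device the paper uses, and your identification of $\theta,\psi,\varrho,\vartheta$ is correct. From that point on, however, your execution diverges from the paper in two respects.

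For the \emph{infinite-battery} case, the paper does not iterate the double integral as you propose. Instead it passes to the product variable $z=g_{_{DS}}g_{_{SD}}$, whose PDF under independent $\mathrm{Exp}(1)$ marginals is $f_Z(z)=2\operatorname{K}_0(2\sqrt{z})$, and then evaluates $\int \Omega(\phi z)\operatorname{K}_0(2\sqrt{z})\,\mathrm{d}z$ using the Bessel recursions $\int q^{t+1}\operatorname{K}_t(q)\,\mathrm{d}q=-q^{t+1}\operatorname{K}_{t+1}(q)$ and $\operatorname{K}_{t+1}-\operatorname{K}_{t-1}=\tfrac{2t}{q}\operatorname{K}_t$. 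Your iterated route (integrate $g_{_{DS}}$ first to get $e^{-c/g_{_{SD}}}$, then use $\int_0^\infty e^{-y-c/y}\,\mathrm{d}y=2\sqrt{c}\,\operatorname{K}_1(2\sqrt{c})$) would also work and yields the same answer; the product-PDF route is just slightly cleaner because the ramp is linear in $z$ and only two primitives $\int\operatorname{K}_0$ and $\int z\operatorname{K}_0$ are needed.

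For the \emph{finite-battery} case your description of the series origin is off, and this is where the bookkeeping you worry about actually bites. The paper integrates in the \emph{opposite} order to what you propose: it does the $g_{_{SD}}$-integral first (which is elementary, since the ramp is linear in $g_{_{SD}}$ for fixed $g_{_{DS}}$), producing factors like $e^{-\varrho/(\phi g_{_{DS}})}$; the remaining $g_{_{DS}}$-integral over $[0,\lambda_{_S}]$ is then $\int_0^{\lambda_{_S}} e^{-x-c/x}\,\mathrm{d}x$, which is not a standard function. The series in $i$ comes from Taylor-expanding $e^{-x}=\sum_i(-1)^ix^i/i!$ there, after which each term $\int_0^{\lambda_{_S}} x^i e^{-c/x}\,\mathrm{d}x$ is exactly $\lambda_{_S}^{i+1}\mathrm{Ei}(i+2,c/\lambda_{_S})$. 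So the series is a Taylor series in $g_{_{DS}}$, not an expansion of an $\int g_{_{SD}}^{-i}e^{-g_{_{SD}}}$-type term as you wrote. This ordering also sidesteps the hyperbola/saturation-line intersection issue you flagged: splitting on $g_{_{DS}}\lessgtr\lambda_{_S}$ first means the inner integral in $g_{_{SD}}$ always sees a fixed hyperbola threshold and no case analysis in the $(g_{_{DS}},g_{_{SD}})$ plane is needed.
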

\begin{proof}
	See Appendix~\ref{App_A}.\phantom\qedhere
\end{proof}
\begin{corollary}
\eqref{pout_inf} reduces to \cite[Eq.(10)]{Lopez2.2017} for Rayleigh fading.
\end{corollary}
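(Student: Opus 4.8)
The plan is to observe that the ideal DC system of Section~\ref{ssc:dcschme}---obtained by setting $P_{\mathrm{csi}}=P_c=0$, hence $\varphi=0$, together with perfect CSI---is structurally identical to the single-hop system of \cite{Lopez2.2017}: the destination $D$ wirelessly powers $S$ over the $D\rightarrow S$ link during $v$ channel uses, and then $S$ sends $k$ bits to $D$ over the independent $S\rightarrow D$ link during $n$ channel uses, with instantaneous SNR $\gamma=\phi\,g_{_{SD}}g_{_{DS}}$ (the infinite-battery specialization of \eqref{gamma} noted in the proof of Lemma~1) and an unbounded battery. Since $\varphi=0$ makes $p_{_S}=1-\exp(-\varphi/\phi)=0$ in \eqref{ps}, proving the corollary reduces to checking that \eqref{pout_inf}, which is exactly the evaluation of $\mathbb{E}_{\mathrm{inf}}[\epsilon(\gamma,k,n)]$ for this model, coincides with \cite[Eq.(10)]{Lopez2.2017} once its Nakagami-$m$ fading parameter is fixed to $m=1$.

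First I would set $m=1$ in \cite[Eq.(10)]{Lopez2.2017}: the two Gamma densities $\tfrac{m^m x^{m-1}}{\Gamma(m)}e^{-mx}$ governing the power gains collapse to the unit-mean exponential PDFs $e^{-x}$ used in \eqref{e_inf}, so the averaging integral becomes the same double integral over $(g_{_{SD}},g_{_{DS}})\in(0,\infty)^2$. Next I would match the auxiliary constants: $\theta=2^{k/n}-1$ is $C^{-1}(k/n)$, while $\psi$, $\varrho$ and $\vartheta$ are the slope and the two breakpoints of the piecewise-linear approximation to $Q\big((C(\gamma)-k/n)/\sqrt{V(\gamma)/n}\big)$ used in both derivations; these are defined identically in the two papers, so no translation is needed. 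Finally I would carry out (or quote from Appendix~\ref{App_A}) the resulting product-exponential integral, whose closed form involves the modified Bessel function $\operatorname{K}_1(\cdot)$, and verify that the outcome is term-by-term the $m=1$ instance of \cite[Eq.(10)]{Lopez2.2017}.

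The main obstacle is purely notational: the two works describe different networks, so one must fix a dictionary between the present parameters and those of \cite{Lopez2.2017}---in particular, that the scalar $\phi=\tfrac{v\eta P_{_{D}}}{n\kappa_{_{DS}}^{2}\sigma_{_{D}}^{2}}$ multiplying $g_{_{SD}}g_{_{DS}}$ in the SNR is the same quantity that appears there after identifying the downlink power, the uplink noise and the path losses under the ideal assumption---and then check that no residual factor of $m$ survives at $m=1$ (for instance in the prefactor, exponent, or argument of the Bessel term, or in the leading $1-(\cdot)$ structure). Once this correspondence is set, the identity is immediate and needs no further computation.
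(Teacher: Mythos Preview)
Your proposal is correct and follows essentially the same route the paper implicitly relies on: the corollary is stated without proof because the derivation of \eqref{pout_inf} in Appendix~\ref{App_A} already uses the product-of-exponentials PDF $f_Z(z)=2\operatorname{K}_0(2\sqrt{z})$ taken from \cite[Eq.~(5)]{Lopez2.2017} with $m=1$, and then applies the same piecewise-linear approximation \eqref{AP} with the same constants $\theta,\psi,\varrho,\vartheta$. Hence \eqref{pout_inf} is, by construction, the Rayleigh ($m=1$) instance of \cite[Eq.~(10)]{Lopez2.2017}, and your parameter-matching argument (identifying $\phi$ and checking that no residual $m$-dependent factors survive) is exactly what is needed to make this explicit.
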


To measure the accuracy of \eqref{pout_fin} and \eqref{pout_inf}, we evaluate the following error metric
\begin{align}
\xi=\frac{\Big|\mathbb{E}[\epsilon_{\mathrm{ex}}]-\mathbb{E}[\epsilon_{\mathrm{ap}}]\Big|}{\mathbb{E}[\epsilon_{\mathrm{ex}}]},
\end{align}
where $\mathbb{E}[\epsilon_{\mathrm{ex}}]$ is the exact error probability according to \eqref{e_fin} and \eqref{e_inf}, for finite and infinite battery, respectively, while $\mathbb{E}[\epsilon_{\mathrm{ap}}]$ is the approximated value according to \eqref{pout_fin} and \eqref{pout_inf}.
Notice that $i_M$ is a parameter that must be carefully selected when evaluating the finite battery approximation given in \eqref{pout_fin} because it establishes the quantity of elements taken into account for the summation. 
As $i_M\rightarrow\infty$, expression \eqref{pout_fin} becomes more accurate but harder to evaluate and computationally heavier. Besides, a relatively small value conduces to an inaccurate approximation.
\begin{figure}[t!] 
	\centering
	\begin{subfigure}[b]{0.45\textwidth}
		\centering
		\includegraphics[width=0.95\textwidth]{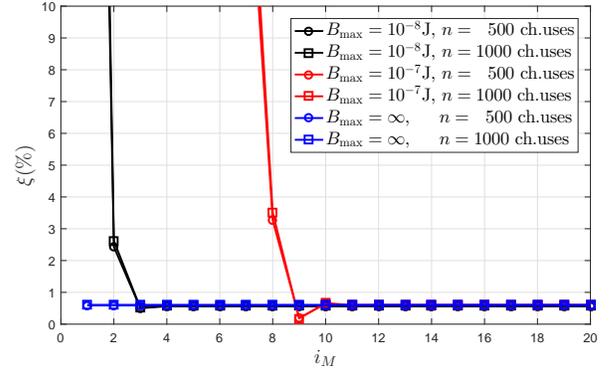}		
	\end{subfigure}    		
	\caption{\small $\xi(\%)$ as a function of $i_M$ with $v=1000$ channel uses.}\label{Fig2}	
	\vspace{-3mm}	
\end{figure}

\begin{figure*}[!t]
	\begin{align}
	&\mathbb{E}_{_{\mathrm{fin}}}[\epsilon(\gamma,k,n)]\approx 1\!-\!\bigg(\frac{1}{2}\!-\!\frac{\psi}{\sqrt{2\pi}}(\theta\!-\!\varrho\!-\!\lambda_{_S}\phi)\bigg)e^{\!-\!\lambda_{_S}\!-\!\frac{\varrho}{\lambda_{_S}\phi}}\!-\!\bigg(\frac{1}{2}\!+\!\frac{\psi}{\sqrt{2\pi}}(\theta\!-\!\vartheta\!-\!\lambda_{_S}\phi)\bigg)e^{\!-\!\lambda_{_S}\!-\!\frac{\vartheta}{\lambda_{_S}\phi}}+\sum_{i=0}^{i_M}\frac{(-1)^i\lambda_{_S}^{i+1}}{i!}\cdot\nonumber\\
	&\qquad\cdot\Bigg[\frac{\psi\lambda_{_S}\phi}{\sqrt{2\pi}}\bigg(\mathrm{Ei}\big(i\!+\!3,\tfrac{\vartheta}{\lambda_{_S}\phi}\big)-\mathrm{Ei}\big(i\!+\!3,\tfrac{\varrho}{\lambda_{_S}\phi}\big)\bigg)\!-\!\bigg(\frac{1}{2}\!-\!\frac{\psi(\theta\!-\!\varrho)}{\sqrt{2\pi}}\bigg)\mathrm{Ei}\big(i\!+\!2,\tfrac{\varrho}{\lambda_{_S}\phi}\big)\!-\!\bigg(\frac{1}{2}\!+\!\frac{\psi(\theta\!-\!\vartheta)}{\sqrt{2\pi}}\bigg)\mathrm{Ei}\big(i\!+\!2,\tfrac{\vartheta}{\lambda_{_S}\phi}\big)\Bigg],\label{pout_fin}\\
	&\mathbb{E}_{_{\mathrm{inf}}}[\epsilon(\gamma,k,n)]\approx 1-\sqrt{\frac{\varrho}{\phi}}\biggl[1+\bigg(\frac{2\psi}{\sqrt{2\pi}}\bigg)\big(\varrho+\phi-\theta\big)\biggl]K_1\bigg(2\sqrt{\frac{\varrho}{\phi}}\bigg)-\frac{2\psi\varrho}{\sqrt{2\pi}}K_0\bigg(2\sqrt{\frac{\varrho}{\phi}}\bigg)+\nonumber\\
	&\ \ \ \ \ \ \ \ \ \ \ \ \ \ \ \ \ \ \ \ \ \ \ \ \ \ \ \ \ \ \ \ \ \ \ \ \ \ \ \ \ \ \ \ \ \ \ \ \ \ \ \ \ \ \ \ -\sqrt{\frac{\vartheta}{\phi}}\biggl[1-\bigg(\frac{2\psi}{\sqrt{2\pi}}\bigg)\big(\vartheta+\phi-\theta\big)\biggl]K_1\bigg(2\sqrt{\frac{\vartheta}{\phi}}\bigg)+\frac{2\psi\vartheta}{\sqrt{2\pi}}K_0\bigg(2\sqrt{\frac{\vartheta}{\phi}}\bigg)\label{pout_inf}.  
	\end{align}
	\hrule
\end{figure*}

Fig.~\ref{Fig2} shows numerically the impact of $i_M$ on $\xi$, when $v=1000$ and $n\in\{500,1000\}$ channel uses, and $B_{_{\mathrm{max}}}\in\{10^{-8},10^{-7},\infty\}$J. The remaining system parameters were the ones chosen in Section~\ref{results}.
The larger the battery capacity, the larger the required $i_M$ for a good approximation using \eqref{e_fin}. 
Note that for all cases $i_M\ge10$ provides a good accuracy with $\xi<1\%$. The accuracy of \eqref{e_fin} and \eqref{e_inf} was also measured for many other different setups, and in all the cases we reached similar results, e.g. $\xi<3.5\%$ for any $(n,v)$ pair with $n,v\in[100\ 5000]$ and $k>64$ bits. The accuracy is good because  we used the linearization (first order approximation) of \eqref{Q} to attain \eqref{pout_fin} and \eqref{pout_inf}, which is symmetrical with respect to $\gamma=2^r-1$, and lies beneath and above of \eqref{Q} for $\gamma<2^r-1$ and $\gamma>2^r-1$, respectively. Thus, when integrating over all the channel realizations, the error tends to vanish.
\subsection{Relaying scheme}\label{DF}
Cooperative technique has rekindled enormous interests from the wireless communication community over the past decade. As shown in \cite{Parisa.2017}, the attained spatial diversity can improve communication reliability, thus it can also reduce the system delay for a target error constraint. 
For an energy-constrained URC system, cooperation seems advisable to reduce the instantaneous consumption power of devices while meeting the reliability/delay constraints.
Consequently, herein we consider the presence of node $R$, which is willing to assist the $S\rightarrow D$ communication all the time, while operating under the DF protocol. 
Under the finite blocklength regime, decoding errors may
occur. We assume that $R$ reliably detects the errors, and consequently it does not forward the message to $D$ when an error occurs, as in \cite{Hu.2016_2}. In that condition, an outage event is declared for the $S\rightarrow R$ link, and 
$R$ is inactive during the time reserved for its transmission, e.g., the last $u+n_2$ channel uses according to Fig.~\ref{Fig1}. The energy that could be saved in those cases is out of the scope of our analysis\footnote{Notice that in URC-S scenarios the probability of such events should be kept low, specially if relaying is crucial in achieving the ultra-reliability, and the impact of that remaining energy on the system performance can be negligible, as shown in \cite{Lopez.2017} for a system without a direct link.}.
This behavior reduces the system complexity compared to other works, e.g. \cite{Nasir.2015} for infinite blocklength and preset power relay.
Nodes $S$ and $R$ are assumed to have the same characteristics, e.g., battery capacity, pilot power, power consumption profile.

Similar than in \eqref{E_s} and \eqref{charge_S}, the energy harvested at $R$ and the battery charge before transmitting are, respectively, 
\begin{align}
E_{_{R}}&=\frac{\eta P_{_D}g_{_{DR}}}{\kappa_{_{DR}}}vT_c,\\
B_{_R}&=\min(E_{_R},B_{_{\mathrm{max}}})=\left\{ \begin{array}{ll}
\dfrac{vT_c\eta P_{_D} g_{_{DR}}}{\kappa_{_{DR}}},& \mathrm{if}\ g_{_{DR}}<\lambda_{_R}\\
B_{\mathrm{max}}, & \mathrm{if}\ g_{_{DR}}\ge\lambda_{_R}
\end{array}\right.\nonumber\\
&=\dfrac{vT_c\eta P_{_D} \min(g_{_{DR}},\lambda_{_R})}{\kappa_{_{DR}}},
\end{align}
where $\lambda_{_R}$ is the $D\rightarrow R$ channel power gain threshold for $R$ battery saturation given by
\begin{align}\label{lambdar}
\lambda_{_R}=\frac{B_{\mathrm{\max}}\kappa_{_{DR}}}{vT_c\eta P_{_D}}.
\end{align}
The expressions for the energy harvested at $S$, the charge of its battery, the available energy for transmission and its transmit power are the same as in the analysis of the DC scheme, respectively \eqref{E_s}, \eqref{charge_S}, \eqref{E_st} and \eqref{P_s}, but with $n=n_1$. Now, for $R$, they are given by
\begin{align}
E_{_{R}}^t&=B_{_{R}}-P_{\mathrm{csi}}uT_c-P_c\, \nu \, T_c, \label{E_rt}\\
P_{_{R}}&\!=\!\frac{E_{_{R}}^t}{n_2T_c}\!=\!\frac{v \eta P_{_{D}}\!\min(\!g_{_{DR}},\lambda_{_R}\!)}{n_2\kappa_{_{DR}}}\!-\!\frac{1}{n_2}\!\Big(uP_{\mathrm{csi}}\!+\! \nu\, P_c \!\Big),\label{P_r}
\end{align}
where $\nu = 2u+n_1+n_2$ (note that $R$ consumes circuit power for receiving the pilots and the data from $S$, besides the power consumed in the transmission of pilots and data to $D$). During the first WIT phase, $S$ broadcasts its data to $D$ and $R$. The expression of the signal received at $D$ in this phase is $y_{_{D1}}=y_{_D}$, equal to \eqref{yDs} but with $n=n_1$, and the signal received at $R$ is
\begin{align}
y_{_R}&=\sqrt{\frac{P_{_S}g_{_{SR}}}{\kappa_{_{SR}}}}x_{_S}+\omega_{_R},\label{yRC}
\end{align}
where $P_{_S}$ is given in \eqref{P_s} and $\omega_{_R}$ is the Gaussian noise vector at $R$ with variance $\sigma_{_R}^2$. When $R$ successfully decodes  the message during the first WIT phase, it re-encodes it in $n_2$ channel uses, and after the pilot signal transmission, it transmits the message to $D$ during the second WIT phase. The received signal at $D$ at this phase is thus given by 
\begin{align}
y_{_{D2}}&=\sqrt{\frac{P_{_R}g_{_{RD}}}{\kappa_{_{RD}}}}x_{_R}+\omega_{_D},\label{yD2}
\end{align}
where $P{_R}$ is given in \eqref{P_r} and $x_{_{R}}$ is the zero-mean, unit-variance Gaussian codebook transmitted by $R$. $R$ uses the same codebook, which is defined a priori, and when $n_1=n_2 \rightarrow x_{_{R}}=x_{_{S}}$.

Let $\gamma_{_{D1}}$ and $\gamma_{_{R}}$ be the instantaneous SNRs at $D$ and $R$, respectively, for the signal transmitted by $S$ during the first WIT phase. Also, let $\gamma_{_{D2}}$ be the instantaneous SNR at $D$ for the signal transmitted by $R$ during the second WIT phase, as long as $R$ achieved a successful decoding of the message transmitted by $S$. Expression \eqref{gamma} is still useful for $\gamma_{_{D1}}$ but with $n=n_1$ ($\phi_{_{D1}}=\phi|_{_{n=n_1}}$ and $\varphi_{_{D1}}=\varphi|_{_{n=n_1}}$ according to \eqref{a} and \eqref{b}, respectively) since now the transmission time depends on $n_1$. Using \eqref{yRC} and \eqref{yD2} we attain the expressions for  $\gamma_{_{R}}$ and $\gamma_{_{D2}}$, which are
\begin{align}
\gamma_{_R}&\!=\!\frac{\eta vP_{_D}g_{_{SR}}\!\min(g_{_{DS}},\lambda_{_S})}{n_1\kappa_{_{DS}}\kappa_{_{SR}}\sigma_{_R}^2}\!-\!\frac{g_{_{SR}}}{\kappa_{_{SR}}\sigma_{_R}^2}\!\bigg(\frac{u}{n_1}(P_{\mathrm{csi}}\!+\!P_c)\!+\!P_c\bigg)\nonumber\\
&=\!\phi_{_R} g_{_{SR}}\min(g_{_{DS}},\lambda_{_S})\!-\!\varphi_{_R} g_{_{SR}},\label{SNR_r}
\end{align}
\begin{align}
\gamma_{_{D2}}&\!=\!\frac{v\eta P_{_D} g_{_{RD}} \min(g_{_{DR}},\lambda_{_R})}{n_2\kappa_{_{DR}}^{2}\sigma_{_D}^2}\!-\! \frac{g_{_{RD}}}{\kappa_{_{RD}}n_2\sigma_{_D}^2}\big(uP_{\mathrm{csi}}\!+\!\nu P_c\big)\nonumber\\
&=\!\phi_{_{D2}} g_{_{RD}}\min(g_{_{DR}},\lambda_{_{R}})\!-\!\varphi_{_{D2}} g_{_{RD}},\label{SNR_d2}
\end{align}
where
\begin{align}
\label{alphar1}
\phi_{_{R}}&=\frac{\eta vP_{_D}}{n_1\kappa_{_{DS}}\kappa_{_{SR}}^{\alpha}\sigma_{_R}^2}, \\ 
\label{alphad2}
\phi_{_{D2}}&=\frac{\eta vP_{_D}}{n_2\kappa_{_{DR}}^{2}\sigma_{_D}^2},\\ 
\label{betar1}
\varphi_{_{R}}&=\frac{1}{\kappa_{_{SR}}\sigma_{_R}^2}\bigg(\frac{u}{n_1}(P_{\mathrm{csi}}+P_c)+P_c\bigg), \\
\label{betad2}
\varphi_{_{D2}}&=\frac{1}{\kappa_{_{RD}}n_2\sigma_{_D}^2}\Big(uP_{\mathrm{csi}}+ \nu P_c\Big).
\end{align}

The delay for the cooperative scheme is 
\begin{align}
\delta= v + \nu=v+2u+n_1+n_2, \label{delta_r}
\end{align}
while the time sharing parameter still obeys \eqref{nu}, but using \eqref{delta_r} for the delay. An outage event for the cooperative scheme can be due to: i) a low harvested energy at $S$, precluding the whole transmission process; ii) communication error in the $S\rightarrow D$ link at the same time that $R$ is unable to perform a transmission due to its low harvested energy or when it fails in decoding the information from $S$; iii) both $S$ and $R$ were capable of completing their transmissions ($R$ succeeded in decoding the message from $S$) but after combining both signals at $D$ there is still a decoding error. Thus, the outage probability can be mathematically written as\footnote{Note that in \eqref{outCC} and \eqref{pout} when a complementary event, e.g. $1-p_{_S}$ or $1-p_{_R}$, does not appear explicitly in the equation is because its effect is implicit when evaluating the communication errors by properly setting the integration limits.} 
\begin{align}\label{outCC}
p_{_{\mathrm{out}}}=p_{_S}+ &p_{_R}\mathbb{E}[\epsilon(\gamma_{_{D1}},k,n_1)]\!+\!\mathbb{E}\Big[\big(1\!-\!\epsilon(\gamma_{_R},k,n_1)\big)\epsilon_{\mathrm{comb}}\Big]\!+\!\nonumber\\
&+(1\!-\!p_{_R})\mathbb{E}\Big[\epsilon(\gamma_{_R},k,n_1) \epsilon(\gamma_{_{D1}},k,n_1)\Big],
\end{align}
where $p_{_S}$ can be calculated as in \eqref{ps} but with $n=n_1$, and $p_{_R}$ is the probability that the harvested energy at $R$ is insufficient simultaneously for channel estimation and for satisfying other consumption requirements at $R$, and can be easily obtained similarly to $p_{_S}$ as 
\begin{align}
p_{_R}&=\mathbb{P}\big[B_{_R}<P_{\mathrm{csi}}\, u\, T_c+P_c \,\nu\, T_c\big]\nonumber\\
&=\!1\!-\!\exp\left({-\frac{\kappa_{_{DR}}}{v \eta P_{_D}}\big(P_{\mathrm{csi}}\, u+P_c\, \nu \big)}\right)\nonumber\\
&=1\!-\!\exp\left({-\frac{\varphi_{_{D2}}}{\phi_{_{D2}}}}\right).
\end{align}
$\mathbb{E}[\epsilon(\gamma_{_{D1}},k,n_1)]$ can be obtained through \eqref{e_fin} (or \eqref{e_inf} for the infinite battery case) with $\gamma=\gamma_{_{D1}}$, $n=n_1$. Also, $\epsilon_{\mathrm{comb}}\in\{\epsilon_{\mathrm{sc}},\epsilon_{\mathrm{mrc}}\}$ is the error probability when $D$ tries to decode the information from the combination of signals received from both WIT phases and depends on the used combination technique: Selection Combining ($\mathrm{SC}$) or Maximal Ratio Combining ($\mathrm{MRC}$).
Notice that
\begin{align}
\mathbb{E}\Big[\epsilon(\gamma_{_R},k,n_1)\epsilon(\gamma_{_{D1}},k,n_1)\Big]&\!\ne\!\mathbb{E}[\epsilon(\gamma_{_R},k,n_1)]\mathbb{E}[\epsilon(\gamma_{_{D1}},k,n_1)],\nonumber\\
\mathbb{E}\big[\big(1\!-\!\epsilon(\gamma_{_R},k,n_1)\epsilon_{\mathrm{comb}}\big)\big]&\!\ne\!\mathbb{E}\big[\big(1\!-\!\epsilon(\gamma_{_R},k,n_1)\big)\big]\mathbb{E}[\epsilon_{\mathrm{comb}}],\nonumber
\end{align}  
since $\gamma_{_R}$, $\gamma_{_{D1}}$ and $\gamma_{\mathrm{comb}}$ are correlated through the variable $g_{_{DS}}$. In the first case, the expectation can be evaluated through \eqref{E1} and \eqref{E2} for finite and infinite battery, respectively, as shown next
\begin{align}
	&\mathbb{E}_{\mathrm{fin}}\Big[\epsilon(\gamma_{_R},k,n_1)\epsilon(\gamma_{_{D1}},k,n_1)\Big]\nonumber\\
	&\!\approx\!\int\limits_{0}^{\infty}\!\int\limits_{0}^{\infty}\!\int\limits_{\tfrac{\varphi}{\phi}}^{\lambda_{_S}}\!q_1(g_{_{DS}},g_{_{SD}},n_1)q_2(g_{_{DS}},g_{_{SR}})e^{\!-\!g_{_{DS}}}\mathrm{d}g_{_{DS}}\mathrm{d}g_{_{SD}}\mathrm{d}g_{_{SR}}\!+\!\nonumber\\
	&\!+\!\!\int\limits_{0}^{\infty}\!\!\int\limits_{0}^{\infty}\!\!\int\limits_{\lambda_{_S}}^{\infty}\!\!q_1(\lambda_{_S}\!,g_{_{SD}},n_1\!)q_2(\lambda_{_S}\!,g_{_{SR}}\!)e^{\!-\!g_{_{DS}}}\!\mathrm{d}g_{_{DS}}\!\mathrm{d}g_{_{SD}}\!\mathrm{d}g_{_{SR}},\label{E1}\\
	&\mathbb{E}_{\mathrm{inf}}\Big[\epsilon(\gamma_{_R},k,n_1)\epsilon(\gamma_{_{D1}},k,n_1)\Big]\nonumber\\
	&\!\!\approx\!\!\int\limits_{0}^{\infty}\!\!\int\limits_{0}^{\infty}\!\!\int\limits_{\tfrac{\varphi}{\phi}}^{\infty}\!\!q_1(g_{_{DS}},g_{_{SD}},n_1\!)q_2(g_{_{DS}},g_{_{SR}})e^{\!-\!g_{_{DS}}}\!\mathrm{d}g_{_{DS}}\!\mathrm{d}g_{_{SD}}\!\mathrm{d}g_{_{SR}}\!, \label{E2}
\end{align}
where
\begin{align}
q_2(x_1,x_2)=Q\Biggl(\!\frac{C((\phi_{_R} x_1\!-\!\varphi_{_R})x_2)\!-\!k/n_1}{\sqrt{V((\phi_{_R} x_1\!-\!\varphi_{_R}) x_2)/n_1}}\Biggl)e^{-x_2}.
\end{align}

The classical $\mathrm{SC}$ technique establishes that the signal with the highest SNR is selected since its error probability is the lowest \cite{Goldsmith.2005}. Nonetheless, when the blocklength is small and different for each link, $n_1 \ne n_2$, this assumption does not hold always as is induced from \eqref{Q}, which motivates us to state the outage probability of the SC combining scheme as
\begin{align}\label{SC}
\epsilon_{\mathrm{sc}}=\min\big(\epsilon(\gamma_{_{D1}},k,n_1),\epsilon(\gamma_{_{D2}},k,n_2)\big),
\end{align}
where $\epsilon(\gamma_{_{D1}},k,n_1)$ and $\epsilon(\gamma_{_{D2}},k,n_2)$ can be obtained through \eqref{Q} for each channel realization. Thus, $D$ has to select the signal to be decoded based on 
\begin{align}\label{SC_new}
\frac{C(\gamma_{_{D1}})-k/n_1}{\sqrt{V(\gamma_{_{D1}})/n_1}}\stacksign{y_{_{D1}}}{y_{_{D2}}}\frac{C(\gamma_{_{D2}})-k/n_2}{\sqrt{V(\gamma_{_{D2}})/n_2}}.
\end{align}
Notice that this is the optimum strategy when $D$ selects only one signal; although it can be difficult to implement. In addition, it seems intractable to find a closed-form expression for \eqref{SC}, that is why in Section~\ref{results} we estimate it numerically. 

On the other hand, the $\mathrm{MRC}$ technique reaches better results than $\mathrm{SC}$, although with higher hardware complexity \cite{Goldsmith.2005}, and limited to the case where $n_1=n_2$. The error probability for that case is given by
\begin{align}\label{mrc}
\epsilon_{\mathrm{mrc}}=\epsilon(\gamma_{_{D1}}+\gamma_{_{D2}},k,n_1),
\end{align}
\begin{figure*}[!t]		
	\begin{align}
	\mathbb{E}_{_{\mathrm{fin}}}\Big[\big(1-\epsilon(\gamma_{_R},k,n_1)\big)&\epsilon_{_{\mathrm{MRC}}}\Big]	
	\approx\int\limits_{0}^{\infty}\int\limits_{0}^{\infty}\int\limits_{0}^{\infty}\int\limits_{\tfrac{\varphi_{_{D1}}}{\phi_{_{D1}}}}^{\lambda_{_S}}\int\limits_{\tfrac{\varphi_{_{D2}}}{\phi_{_{D2}}}}^{\lambda_{_R}}q_3(g_{_{DR}},g_{_{DS}},g_{_{RD}},g_{_{SD}},g_{_{SR}})e^{-g_{_{DS}}-g_{_{DR}}}\mathrm{d}g_{_{DR}}\mathrm{d}g_{_{DS}}\mathrm{d}g_{_{RD}}\mathrm{d}g_{_{SD}}\mathrm{d}g_{_{SR}}+\nonumber\\
	&\qquad+\int\limits_{0}^{\infty}\int\limits_{0}^{\infty}\int\limits_{0}^{\infty}\int\limits_{\lambda_{_S}}^{\infty}\int\limits_{\lambda_{_R}}^{\infty}q_3(\lambda_{_R},\lambda_{_S},g_{_{RD}},g_{_{SD}},g_{_{SR}})e^{-g_{_{DS}}-g_{_{DR}}}\mathrm{d}g_{_{DR}}\mathrm{d}g_{_{DS}}\mathrm{d}g_{_{RD}}\mathrm{d}g_{_{SD}}\mathrm{d}g_{_{SR}} \label{mrc_fin}\\
	\mathbb{E}_{_{\mathrm{inf}}}\!\Big[\big(1\!-\!\epsilon(\gamma_{_R}\!,k,n_1)\big)&\epsilon_{_{\mathrm{MRC}}}\!\Big]\!\approx\!\int\limits_{0}^{\infty}\!\int\limits_{0}^{\infty}\!\int\limits_{0}^{\infty}\!\int\limits_{\tfrac{\varphi_{_{D1}}}{\phi_{_{D1}}}}^{\infty}\!\int\limits_{\tfrac{\varphi_{_{D2}}}{\phi_{_{D2}}}}^{\infty}\!\!q_3(g_{_{DR}},g_{_{DS}},g_{_{RD}},g_{_{SD}},\!g_{_{SR}})e^{-g_{_{DS}}-g_{_{DR}}}\mathrm{d}g_{_{DR}}\mathrm{d}g_{_{DS}}\mathrm{d}g_{_{RD}}\mathrm{d}g_{_{SD}}\mathrm{d}g_{_{SR}}\label{mrc_inf}
	\end{align}
	\begin{align}
	&q_3(x_1,x_2,x_3,x_4,x_5)\!=\!\Biggl(\!1\!-\!Q\bigg(\frac{C(x_2(\phi_{_{_R}}x_5\!-\!\varphi_{_{_R}}))\!-\!k/n_1}{\sqrt{V(x_2(\phi_{_{_R}}x_5\!-\!\varphi_{_{_R}}))/n_1}}\bigg)\!\Biggl)\!Q\!\Biggl(\!\frac{C(x_4(\phi_{_{D1}}x_2\!-\!\varphi_{_{D1}})\!+\!x_3(\phi_{_{D2}} x_1\!-\!\varphi_{_{D2}}))\!-\!k/n_1}{\sqrt{V(x_4(\phi_{_{D1}}x_2-\varphi_{_{D1}})+x_3(\phi_{_{D2}} x_1\!-\!\varphi_{_{D2}}))/n_1}}\!\Biggl)\!e^{-\!x_3\!-\!x_4\!-\!x_5}\label{q}
	\end{align}
	\hrule
\end{figure*}

\begin{figure}[t!] 
	\centering
	\begin{subfigure}[b]{0.45\textwidth}
		\centering
		\includegraphics[width=0.95\textwidth]{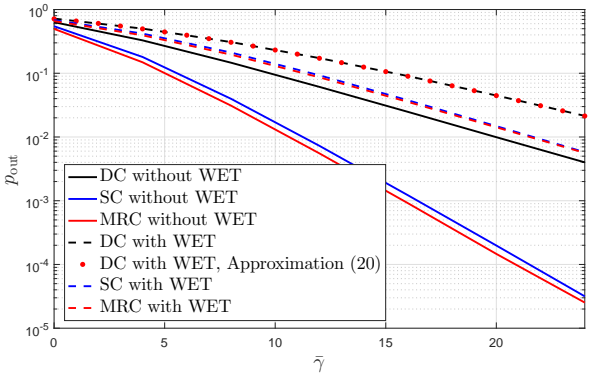}
	\end{subfigure}    
	\vspace{-1mm}		
	\caption{\small SC vs MRC for and ideal system with equal mean SNR ($\bar{\gamma}$) in all links and equal blocklengths in the broadcast and cooperation phases, for the case of without WET and with WET. For scenarios with WET, $\bar{\gamma}=\phi=\phi_{_{D1}}=\phi_{_{D2}}=\phi_{_{R}}$.}\label{Fig3}
	\vspace{-3mm}
\end{figure}
\noindent and the last term in \eqref{outCC} for finite and infinite battery is given in \eqref{mrc_fin} and \eqref{mrc_inf} (on the top of the next page), respectively, where $q_3(x_1,x_2,x_3,x_4,x_5)$ is defined in \eqref{q}.

\subsection{Impact of the Combining Technique}\label{mrc_sc}
As we mentioned above, the MRC technique reaches better results than SC for a conventional network without WET \cite{Goldsmith.2005}. Notice that without WET and assuming that nodes $S$ and $R$ transmit the information with fixed power, the links $S\rightarrow D$ and $S\rightarrow R$ are completely independent. However, for the model with WET analyzed in this work, those links are correlated through channel coefficient $g_{_{DS}}$ and therefore, the diversity cannot be fully achieved. 
Based on that, it is not surprising that the impact of the combining scheme used at $D$ is much lower for the case with WET than for a conventional network as shown in Fig.~\ref{Fig3}. As shown in the figure, there is not a significant performance difference between SC and MRC, being practically indistinguishable for $p_{\mathrm{out}}<10^{-2}$, which is our study case since we are analyzing URC-S scenarios. Also, notice that the performance gap between DC and the cooperative scheme is much smaller for the case with WET. Nonetheless, using cooperation remains advantageous and further discussion takes place in Section~\ref{results}. Additionally, Fig.~\ref{Fig3} also corroborates the accuracy claimed when giving \eqref{pout_inf}.
\section{System Performance with Imperfect CSI}\label{III}
Herein we consider a more realistic case where the channel estimate $\hat{h}_{ij}$ made by node $j\in\{R, D\}$ of the true channel coefficient $h_{ij}$ is imperfect (iCSI), {\it i.e.}, $h_{ij}\ne\hat{h}_{ij}$. The pilot signal received by $j$ due to the transmission of $i$ is 
\begin{align}
y_{_j}^p&=\sqrt{\dfrac{P_{\mathrm{csi}}}{\kappa_{_{ij}}}}h_{ij}x_{_i}^p+\omega_{j},
\end{align}
where $i\in\{S,R\}$ with $i\ne j$, and $x_{i}^p$ are the $u$ pilot symbols transmitted by $i$ with $\mathbb{E}[|x_{i}^p|^2]=1$.
Supposing minimum mean square error (MMSE) estimation at the receivers, the relation between the true channel coefficients, the estimates, and the estimation error can be modeled as~\cite{Gursoy.2009}
\begin{align}\label{ht}
h_{ij}=\hat{h}_{ij}+\tilde{h}_{ij},
\end{align}
where $h_{ij}\sim\mathcal{CN}(0,1)$, 
\begin{align}
\hat{h}_{ij}&\sim\mathcal{CN}\Bigg(0,\dfrac{P_{\mathrm{csi}}u\kappa_{ij}^{-1}}{P_{\mathrm{csi}}u\kappa_{ij}^{-1}+\sigma_{_j}^2}\Bigg),\label{h1}
\end{align}
and $\tilde{h}_{ij}$, the error in the channel estimation, is
\begin{align}
\tilde{h}_{ij}&\sim\mathcal{CN}\Bigg(0,\dfrac{\sigma_{_j}^2}{P_{\mathrm{csi}}u\kappa_{ij}^{-1}+\sigma_{_j}^2}\Bigg).\label{h2}
\end{align}
Using \eqref{ht}, \eqref{h1} and \eqref{h2}, the signal received at $j$ from $i$ during a WIT phase is
\begin{align}
y_{_j}&=\sqrt{\dfrac{P_{_i}}{\kappa_{ij}}}h_{ij}x_{_i}+\omega_{_j}=\sqrt{\dfrac{P_{_i}}{\kappa_{ij}}}\hat{h}_{ij}x_{_i}+\sqrt{\dfrac{P_{_i}}{\kappa_{ij}}}\tilde{h}_{ij}x_{_i}+\omega_{_j}\nonumber\\
&=\sqrt{\dfrac{P_{_i}}{\kappa_{ij}}}\hat{h}_{ij}x_{_i}+w_{\mathrm{eq}},\label{yii}
\end{align}
where $w_{\mathrm{eq}}$ is the effective noise due to both the estimation error and AWGN. That noise is neither Gaussian nor independent on the data signal in general. However, it is been shown in \cite{Medard.2000} that
treating it as a circularly symmetric zero mean complex Gaussian process with variance $\sigma_{_{\mathrm{eq}}}^2=P_{_i}\kappa_{ij}^{-1}\sigma_{\tilde{h}_{ij}}^2+\sigma_{_j}^2$ provides the worst case scenario for the channel capacity (more precisely, for the mutual information between a Gaussian channel input and
the channel output)\footnote{This result has been widely used, e.g., \cite{Javaher.2010,Gursoy.2009}.}. This implies that the equivalent SNR under such assumption is a lower bound of the actual instantaneous SNR, thus (and based on \eqref{Q} and \eqref{error}), this assumption also provides the worst case scenario for the system performance  in terms of error probability. Based on \eqref{yii}, the equivalent instantaneous SNR at $j$ in the iCSI case is thus given by
\begin{align}
\gamma^{\mathrm{imp}}&=\frac{P_{_i}|h_{ij}|^2\kappa_{ij}^{-1}}{\sigma_{_{\mathrm{eq}}}^2}=\frac{P_{_i}\hat{g}_{ij}\kappa_{ij}^{-1}}{P_{_i}\kappa_{ij}^{-1}\sigma_{\tilde{h}_{ij}}^2+\sigma_{_j}^2}. \label{SNRi} 
\end{align}
\subsection{DC scheme}\label{Imp_DC}
For the DC scheme, substituting \eqref{P_s}, and $\sigma_{\tilde{h}_{_{SD}}}^2=\sigma_{_D}^2/\Big(P_{\mathrm{csi}}u\kappa_{_{SD}}^{-1}+\sigma_{_D}^2\Big)$ from \eqref{h2}, into \eqref{SNRi}, we attain the following equivalent instantaneous SNR at $D$
\begin{figure*}[!t]	
	\begin{align}
	\frac{\partial^2\gamma^{\mathrm{imp}}}{\partial P_{\mathrm{csi}}^2}\!=\!-\!2g_{_{SD}}nu^2\frac{(\kappa_{\cdot}\sigma^2\!-\! P_c(n\!+\!u)\!+\!\min(g_{_{DS}},\!\lambda)\frac{\eta P_{_{D}}v}{\kappa_{\cdot}})(\kappa_{\cdot}\sigma^2n\!-\! P_c(n\!+\!u)\!+\!\min(g_{_{DS}},\!\lambda)\frac{\eta P_{_{D}}v}{\kappa_{\cdot}})}{\sigma^2\big(\kappa_{\cdot}n\sigma^2\!-\! P_c(n\!+\!u)\!+\!\min(g_{_{DS}},\lambda)\frac{\eta P_{_{D}}v}{\kappa_{\cdot}}+(n-1)uP_{\mathrm{csi}}\big)^3}\label{gamImp_d2}\tag{54}
	\end{align}
	\hrule
\end{figure*}
\begin{align}
\gamma^{\mathrm{imp}}
&=\frac{P_{_S}g_{_{SD}}}{\kappa_{_{SD}}\sigma_{_D}^2}\bigg(\frac{P_{\mathrm{csi}}u}{P_{\mathrm{csi}}u\!+\!P_{_S}\!+\!\kappa_{_{SD}}\sigma_{_D}^2}\bigg)\nonumber\\
&=\!(\phi \min(g_{_{DS}},\lambda_{_S})\!-\!\varphi) 
\cdot\nonumber\\
&\cdot g_{_{SD}}\frac{P_{\mathrm{csi}}u}{P_{\mathrm{csi}}u+\kappa_{_{SD}}\sigma_{_D}^2(\phi\min(g_{_{DS}},\lambda_{_S})-\varphi+1)}.\label{SNRdc}
\end{align}
Thus, the effect of imperfect channel estimation can be seen as a decrease in the instantaneous SNR seeing at $D$. Thus, to find the information error probability it is just necessary to evaluate \eqref{e_fin} and \eqref{e_inf} for finite and infinite battery, respectively, but with $\gamma=\gamma^{\mathrm{imp}}$.

Substituting \eqref{a} and \eqref{b} into \eqref{SNRdc} and calculating its second derivative yields \eqref{gamImp_d2} at the top of the next page,
where subscripts of $\sigma_{_D}^2,\ \lambda_{_S}$ are omitted to shorten the notation, and  $\kappa_{\cdot}=\kappa_{_{DS}}=\kappa_{_{SD}}$ for convenience. 
We know that $B_{_S}>P_c(u+n)T_c$ is the condition required for the CSI estimation phase to take place; otherwise there is an outage due to the insufficiency energy for transmission, which is counted in the term $p_{_S}$ of \eqref{pout}. Based on \eqref{charge_S}, that equality is equivalent to $\min(g_{_{DS}},\!\lambda)\frac{\eta P_{_{D}}v}{\kappa_{\cdot}}>P_c(n\!+\!u)$, thus the fraction in \eqref{gamImp_d2} is positive and $\frac{\partial^2\gamma^{\mathrm{imp}}}{\partial P_{\mathrm{csi}}^2}<0$. Therefore $\gamma^{\mathrm{imp}}$ is concave on $P_{\mathrm{csi}}$, and since $\epsilon(\gamma^{\mathrm{imp}},k,n)$ is decreasing on $\gamma^{\mathrm{imp}}$ (the greater the SNR, the smaller the chance of error), there is only one value of $P_{\mathrm{csi}}$ that minimizes the error probability by maximizing the SNR for given channel realizations $g_{_{DS}}$ and $g_{_{SD}}$. However, when averaging over all channel realizations it becomes intractable to prove mathematically the existence of a unique optimum pilot transmit power, let alone find it, fundamentally because $\epsilon(\gamma^{\mathrm{imp}},k,n)$ is not convex on $\gamma^{\mathrm{imp}}$. Thus, we resort to numerical evaluation. We found that $p_{\mathrm{out}}(P_{\mathrm{csi}})$ has the shape of an inverted bell, in part because for small $P_{\mathrm{csi}}$, the mean SNR is low and $\mathbb{E}[\epsilon(\gamma^{\mathrm{imp}},k,n)]\rightarrow 1$; while for large $P_{\mathrm{csi}}$, $p_{_S}\rightarrow 1$. In both cases $p_{\mathrm{out}}\rightarrow 1$ (see \eqref{pout}), with equality from a point onwards and from a point backwards. Based on the inverted bell shape, the minimum is unique, and it is found numerically in Subsection~\ref{imperfectEst}, while analyzing 
its dependence on some system parameters.
\subsection{Relaying scheme}
The equivalent instantaneous SNRs at $D$ and $R$ during the first and second WIT phases with iCSI are
\begin{align}
\setcounter{equation}{54}
\gamma_{_{D1}}^{\mathrm{imp}}&=(\phi_{_{D1}} \min(g_{_{DS}},\lambda_{_S})-\varphi_{_{D1}}) g_{_{SD}}\cdot\nonumber\\
&\cdot
\frac{P_{\mathrm{csi}}u}{P_{\mathrm{csi}}u\!+\!\kappa_{_{SD}}\sigma_{_D}^2(\phi_{_{D1}}\min(g_{_{DS}},\lambda_{_S})-\varphi_{_{D1}}+1)},\label{SNRid1}\\
\gamma_{_{R}}^{\mathrm{imp}}&=(\phi_{_{R}} \min(g_{_{DS}},\lambda_{_S})-\varphi_{_{R}}) g_{_{SR}}\cdot\nonumber\\&
\cdot
\frac{P_{\mathrm{csi}}u}{P_{\mathrm{csi}}u+\kappa_{_{SR}}\sigma_{_R}^2(\phi_{_{R}}\min(g_{_{DS}},\lambda_{_S})-\varphi_{_{R}}+1)},\label{SNRir}
\end{align}
\begin{align}
\gamma_{_{D2}}^{\mathrm{imp}}&=(\phi_{_{D2}} \min(g_{_{DR}},\lambda_{_R})-\varphi_{_{D2}})g_{_{RD}}\cdot\nonumber\\
&\cdot
\frac{P_{\mathrm{csi}}u}{P_{\mathrm{csi}}u\!+\!\kappa_{_{RD}}\sigma_{_D}^2(\phi_{_{D2}}\min(g_{_{DR}},\lambda_{_R})\!-\!\varphi_{_{D2}}+1)},\label{SNRid2}
\end{align}
where $\gamma_{_{D1}}^{\mathrm{imp}}$ is obtained by using \eqref{SNRdc} with $n=n_1$, $\gamma_{_{R}}^{\mathrm{imp}}$ comes from \eqref{SNRi} with $i=S$, $j=R$ and using \eqref{P_s}, and $\gamma_{_{D2}}^{\mathrm{imp}}$ comes from \eqref{SNRi} with $i=R$, $j=D$ and using \eqref{P_r}. Now, the information error probability for each link can be easily calculated as in the pCSI case, but with $\gamma_{_{D1}}=\gamma_{_{D1}}^{\mathrm{imp}}$, $\gamma_{_{R}}=\gamma_{_{R}}^{\mathrm{imp}}$ and $\gamma_{_{D2}}=\gamma_{_{D2}}^{\mathrm{imp}}$.
\section{Numerical Results}\label{results}
In this section, we present numerical results to investigate the performance of the system at finite blocklength with pCSI and iCSI at the receivers.  Let $T_c=\SI{2}{\micro\second}$, thus $\sigma_{_D}^2=-110$dBm is a valid assumption if a bandwidth around 1MHz is assumed. We consider scenarios with stringent error probability and delay requirements, which are expected to be typical of URC-S services in future 5G systems. Therefore, being $\varepsilon_{_0}$ the target error probability and $\delta_{_0}$ the maximum allowable delay,  $p_{\mathrm{out}}\le\varepsilon_{_0}$ and $\delta\le\delta_0$ must be satisfied. Let $\delta_0=8$ms$\rightarrow4000$ channel uses and $\varepsilon_{_0}\in\{10^{-3},10^{-4}\}$.

Results are obtained by setting $\alpha=2$, $d_{_{SR}}=d_{_{SD}}=d_{_{DR}}=d$ and $\kappa=10^3$, what is equivalent to 30 dB average signal power attenuation at a reference distance of 1 meter. 
Sensor nodes, $S$ and $R$, are ultra-low consumption devices\footnote{Notice that sensors in that order of consumption in active mode already exist. Some examples can be found in \cite{Hestnes.2016}.} with $P_c=-30$dBm, and, following the state-of-the-art in circuit design, we consider $\eta=0.5$~\cite{Lu.2015}. Moreover, $P_{_D}=50$dBm and, unless stated otherwise, $k=256$ bits. The utilized values of the system parameters are summarized in Table~\ref{table:SystemParameters}.

\begin{table}[!h]
	\centering
	\caption{System parameters}
	\label{table:SystemParameters}
	\begin{tabular}{lc|lc}
		\hline		
		\textbf{Parameter}	& \textbf{Value}	 & \textbf{Parameter}	& \textbf{Value}\\
		\hline
		$T_c$				& $2\mu$s & $\kappa$ & $10^3$		\\
		$\sigma_{_D}^2$		& $-110$dBm	& $P_c$ & $-30$dBm \\
		$\delta_{_{0}}$			& 8ms	& $\eta$ & $0.5$\\
		$\varepsilon_{_{0}}$	& $\{10^{-3},10^{-4}\}$ & $P_{_D}$ & $50$dBm	\\
		$\alpha$			& $2$	& $k$& $256$\\
		$d$				& $\{10,40\}$m & $u$& $1$\\
		\hline
	\end{tabular}
\end{table}

\subsection{Ideal System}\label{S12}
Herein we analyze the performance of an ideal system, where all the harvested energy is used for information transmission and the receiver has perfect knowledge of the channel while $P_{\mathrm{csi}}=P_c=0$ and $p_{_S}=p_{_R}=0$. Notice that the outage probability equals the error probability for such systems. The performance under such assumptions, which are the most common in the literature, offers an upper-bound for the performance of practical systems. Numerical results are obtained with $d=40$m.

Fig.~\ref{Fig4a} shows the minimum delay, $\delta^*=\min\limits_{p_{_{\mathrm{out}}}\le\varepsilon_{_{0}}}\delta$, required to deliver messages of $k=256$ bits while meeting the reliability constraints given by $\varepsilon_{_{0}}$. DC, and cooperative schemes with MRC ($n_1=n_2=n/2$) and SC ($n_1=n_2=n/2$, $n_1=0.2n$ and $n_2=0.8n$, $n_1=0.8n$ and $n_2=0.2n$), are compared. Here we assume infinite batteries while some analyzes assuming finite batteries are discussed later. Notice that the DC scheme is unable to reach a reliability around  $99.99\%$ ($\varepsilon_{_0}=10^{-4}$) for a maximum allowable delay of 4000 channel uses. However, cooperation through $R$  solves that problem reaching the desirable results. This is because the attained spatial diversity can improve either communication reliability, or can reduce the system delay for a target outage constraint as is the case here.
Among the cooperative schemes, the SC setup with $n_1<n_2$ presents the worst performance since the chances of communication errors increase in both $S\rightarrow D$ and $S\rightarrow R$ links simultaneously. While the opposite occurs for $n_1>n_2$, although slightly.
We can see that the optimum value for $n$ is small, $n^*\le 600$ channel uses, for the cooperative scenarios. On the other hand, the performance in terms of the time sharing parameter, $\beta$, is shown in Fig.~\ref{Fig4b}. When $n$ increases, the required value for $v$ tends to decrease, and therefore $\beta$ decreases too. As the reliability requirements are more stringent, the WET phase must be larger, which is even more accentuated for the DC scheme. Notice that we use the same line to denote the MRC and SC techniques with $n_1=n_2=n/2$ since their performance is almost identical, which is expected from the comments made in Section~\ref{mrc_sc}.  
\begin{figure}[t!] 
	\centering
	\begin{subfigure}[b]{0.45\textwidth}
		\centering
		\includegraphics[width=0.95\textwidth]{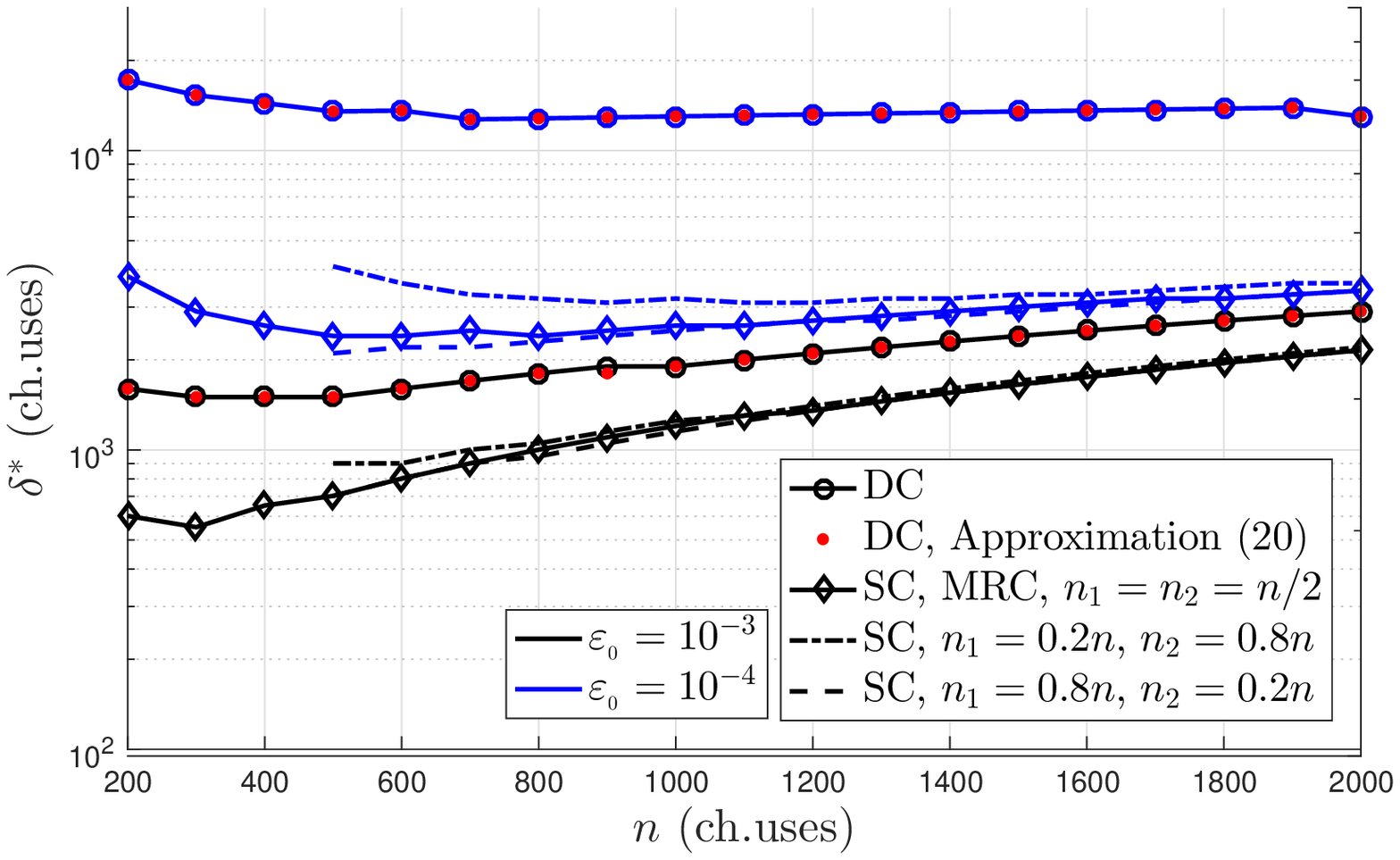}
		\caption{\label{Fig4a}}
	\end{subfigure}    
    \par\bigskip
   \begin{subfigure}[b]{0.45\textwidth}
		\centering
		\includegraphics[width=0.95\textwidth]{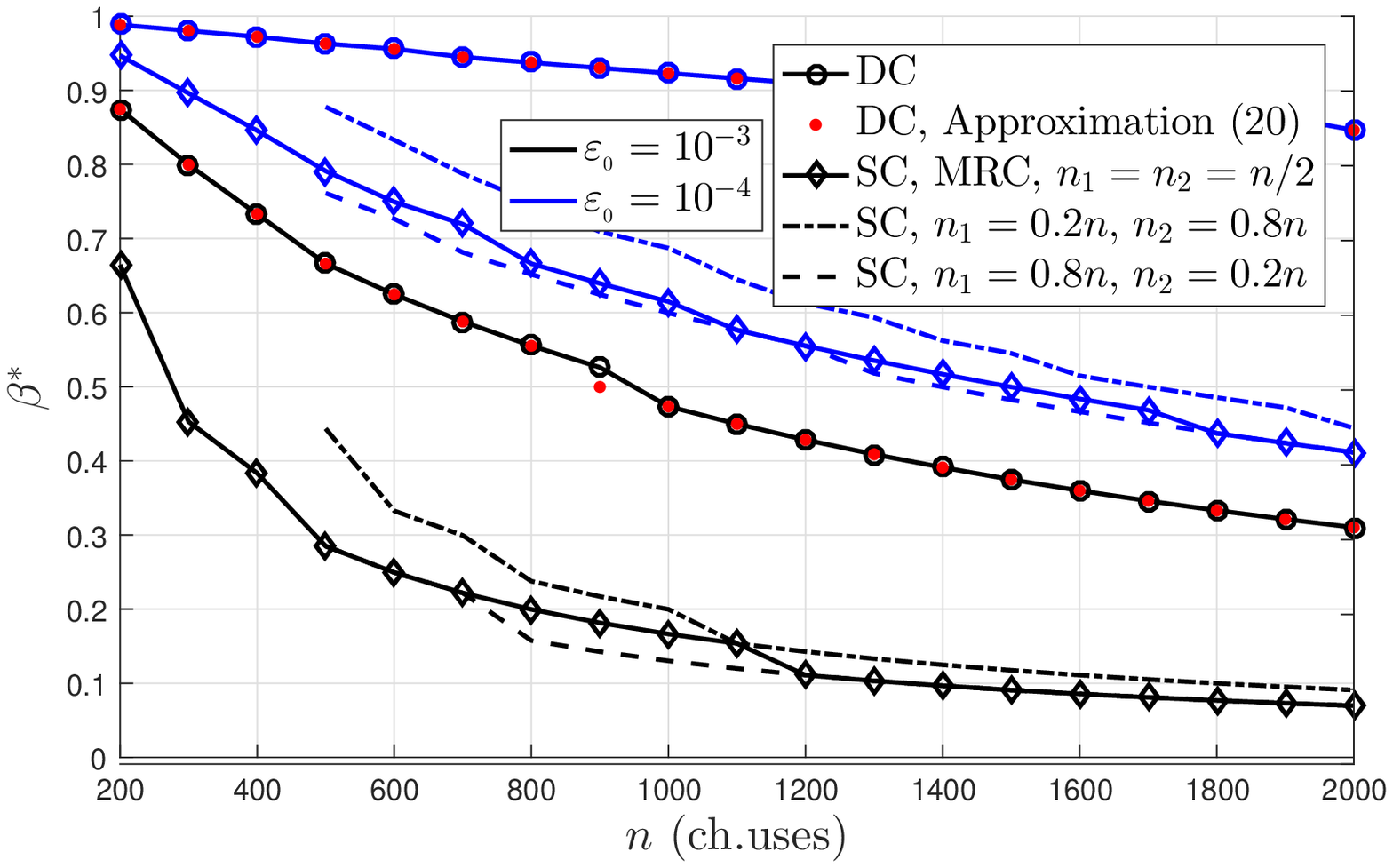}
		\caption{\label{Fig4b}}
	\end{subfigure}
    \vspace{-1mm}		
	\caption{\small (\subref{Fig4a}) $\delta^*$,  (\subref{Fig4b}) $\beta^*$, as a function of $n$ with $B_{_{\mathrm{max}}}=\infty$.}	
	\label{Fig4}	
	\vspace{-2mm}
\end{figure}

The minimum reachable outage probability, $p_{\mathrm{out}}^*=\min\limits_{\delta}p_{\mathrm{out}}$, as a function of the system delay is shown in Fig.~\ref{Fig5a}. According to the figure, it is possible to reach a reliability of $99.995\%$ ($\varepsilon_{_0}=5\times10^{-5}$) with a delay of 8ms (4000 channel uses) when $R$ cooperates with $S$. Without its assistance, the required delay would be higher than 10ms (5000 channel uses) and it is not shown in the figure.
As shown in Fig.~\ref{Fig5b}, while reducing the message length, the minimum delay, required for a given reliability, decreases. Thus, it is possible to achieve a reliability around $99.99\%$ without the relay assistance as long as the messages have no more than $k=80$ bits. For a given reliability and based on \eqref{Q}, when $k$ increases, $n$ should also increase to avoid increasing the rate $r$, and consequently $\gamma$ and $C(\gamma)$ tend to increase. In addition, increasing $v$ decelerates the capacity reduction and $n$ is not required to be so high, thus there is a trade-off between increasing $n$ and $v$. Whatever the case, increasing $k$ for a given $\varepsilon_{_0}$ renders an unavoidable increase in the total number of required channel uses, hence larger delay. 
On the other hand, Figs.~\ref{Fig4} and \ref{Fig5} corroborate the accuracy of expression (20), claimed when discussing Fig.~\ref{Fig2}.
\begin{figure}[t!] 
	\centering
	\begin{subfigure}[b]{0.45\textwidth}
		\centering
		\includegraphics[width=0.95\textwidth]{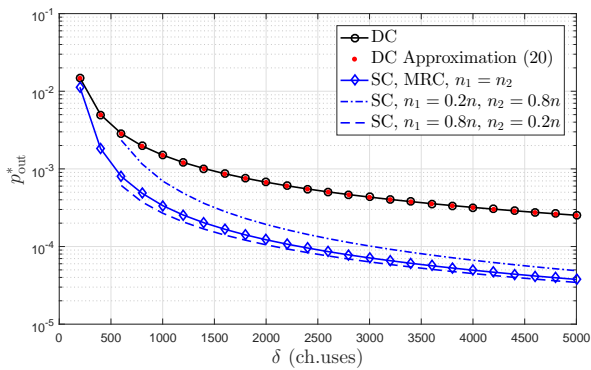}		
		\caption{\label{Fig5a}}
	\end{subfigure}    
	\par\bigskip
	\begin{subfigure}[b]{0.45\textwidth}
		\centering
		\includegraphics[width=0.95\textwidth]{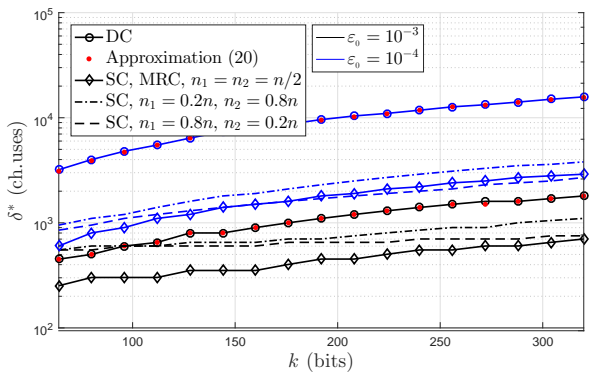}
		\caption{\label{Fig5b}}
	\end{subfigure}		
	\caption{\small (\subref{Fig5a}) $p_{\mathrm{out}}^*$, (\subref{Fig5b}) $\delta^*$,  as a function of $\delta$,  $k$, with  $B_{_{\mathrm{max}}}\!=\!\infty$.}	
	\label{Fig5}	
	\vspace{-2mm}
\end{figure} 
\subsection{Non-Ideal channel estimation}
Herein, we analyze a more realistic case, where part of the harvested energy is required by the devices (namely $S$ and $R$) to feed their circuits and to transmit the pilot signals in order to acquire CSI at the receivers. The imperfection of the acquired CSI is also taken into account.
\begin{figure}[t!] 
	\centering
	\begin{subfigure}[b]{0.45\textwidth}
		\centering
		\includegraphics[width=0.95\textwidth]{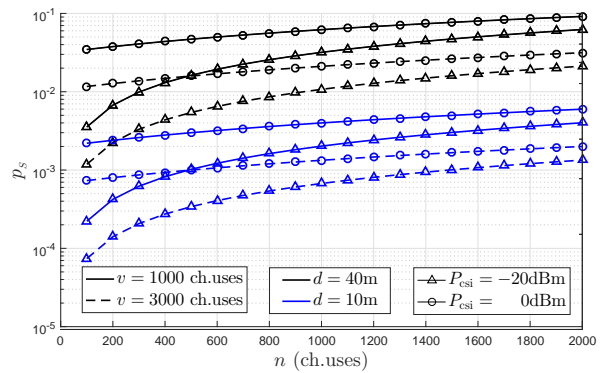}
		\caption{\label{Fig6a}}
	\end{subfigure}    
	\par\bigskip
	\begin{subfigure}[b]{0.45\textwidth}
		\centering
		\includegraphics[width=0.95\textwidth]{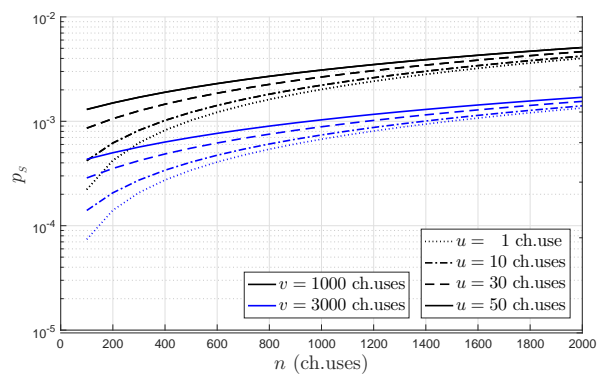}
		\caption{\label{Fig6b}}
	\end{subfigure}		
	\caption{\small $p_{_S}$ as a function of $n$ with $B_{_{\mathrm{max}}}=\infty$. (\subref{Fig6a}) $d\in\{10,40\}$m, $P_{\mathrm{csi}}\in\{0,-20\}$dBm and $u=1$ channel use, (\subref{Fig6b}) $d=10$m and $P_{\mathrm{csi}}=-20$dBm.	}
	\label{Fig6}		
	\vspace{-2mm}
\end{figure}

\subsubsection{Outages due to insufficient energy harvested}\label{S21}
Fig.~\ref{Fig6} shows the probability that the harvested energy at $S$ is insufficient simultaneously for channel estimation and for satisfying other consumption requirements at $S$. Notice that the curves appear as functions of $n$, where $n=n_1$ for the cooperative case, and $v\in\{1000,3000\}$ channel uses. In Fig.~\ref{Fig6a}, different power levels for channel estimation were tested, $P_{\mathrm{csi}}\!=\!\{0,\ \!-\!20\}$dBm, with $u\!=\!1$ channel use. Thus, $P_{\mathrm{csi}}T_c$ is equivalent to the energy used for CSI estimation. Notice that when $n$ increases, $p_{_S}$ also increases, which is expected from \eqref{ps}, since the sensor remains active longer and its circuits require more energy. Moreover, a larger value of $v$ leads to more energy being harvested, decreasing $p_{_S}$. Also, we can notice that it is impossible to reach a high reliability when $d=40$m and consequently from now on we use $d=10$m. The performance increases by decreasing $P_{\mathrm{csi}}$, specially for a small $n$. Although it is not shown in the figure, the performance improves much slower by decreasing $P_{\mathrm{csi}}$ below $-20$dBm. Thus, it seems convenient to always use an energy no greater than $uT_cP_{\mathrm{csi}}\big|_{u=1,P_{\mathrm{csi}}=-20\mathrm{dBm}}$, although we have to keep in mind that when $P_{\mathrm{csi}}$ decreases, the communication errors increase due to a greater imperfection on the channel estimates, which is analyzed later. On the other hand, Fig.~\ref{Fig6b} shows $p_{_S}$ when $P_{\mathrm{csi}}=-20$dBm and $u=\{1,\ 10,\ 30,\ 50\}$ channel uses. When $u$ increases, the required energy for CSI estimate increases thus, the probability that the harvested energy is insufficient for attending those requirements also increases. This dependency starts decreasing when $n$ increases since the influence of $u$ on the delay and the active sensor phase reduces. We argue that, from the standpoint of $p_{_S}$, the more suitable strategy is to set $u=1$ channel use and properly choose $P_{\mathrm{csi}}$ in order to minimize the impact on $P_c$, which is confirmed when analyzing the communication error expressions.
\begin{figure}[t!] 
	\centering
	\begin{subfigure}[b]{0.45\textwidth}
		\centering
		\includegraphics[width=0.95\textwidth]{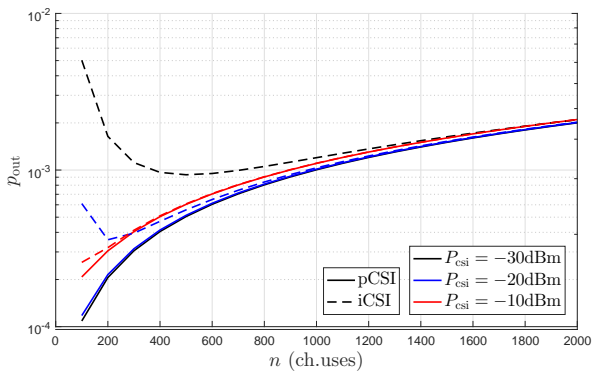}
		\caption{\label{Fig7a}}
	\end{subfigure}    
	\par\bigskip
	\begin{subfigure}[b]{0.45\textwidth}
		\centering
		\includegraphics[width=0.95\textwidth]{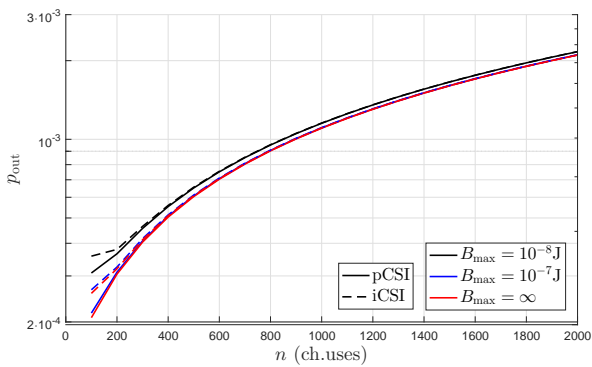}
		\caption{\label{Fig7b}}
	\end{subfigure}
	\caption{\small $p_{\mathrm{out}}$ as a function of $n$ with $u=1$ and $v=2000$ channel uses. (\subref{Fig7a}) $P_{\mathrm{csi}}\in\{-30,-20,-10\}$dBm and $B_{_{\mathrm{max}}}=\infty$,  (\subref{Fig7b}) $P_{\mathrm{csi}}=-10$dBm and $B_{_{\mathrm{max}}}\in\{10^{-8},10^{-7},\infty\}$J.}	
	\label{Fig7}		
	\vspace{-2mm}
\end{figure}

\subsubsection{On the imperfect estimates and finite battery}\label{imperfectEst}
Fig.~\ref{Fig7} shows the DC system performance in terms of outage probability as a function of $n$, and considering $u=1,\ v=2000$ channel uses. In Fig.~\ref{Fig7a}, we set $B_{_{\mathrm{max}}}=\infty$ and evaluate the performance for $P_{\mathrm{csi}}\in\{-30,-20,-10\}$dBm when the estimates are considered perfect and imperfect. The performance gap when assuming pCSI and iCSI is very perceptible for low pilot power, which means that the used energy is still insufficient and the number of pilot symbols and/or its transmit power must be increased. Notice that a proper energy value is around $-10\,$dBm$\cdot T_c$, which is larger than the one we had recommended in Section~\ref{S21} since now the channel estimation affects the error probability when there is communication, and a relatively larger value is required. Consequently, the results shown in Fig.~\ref{Fig7b} are with $P_{\mathrm{csi}}=-10$dBm, but now considering the finite battery case. Notice that, when $B_{_{\mathrm{max}}}=10^{-7}$J the system performance approximates the case when $B_{_{\mathrm{max}}}=\infty$
(blue and red curves are almost coincident). The general performance gets worse when $B_{_{\mathrm{max}}}=10^{-8}$J, although practical batteries have storage capacities much greater than the values under consideration here, which means that models assuming infinite battery are adequate in this context.
\begin{figure}[t!] 
	\centering
	\begin{subfigure}[b]{0.45\textwidth}
		\centering
		\includegraphics[width=0.95\textwidth]{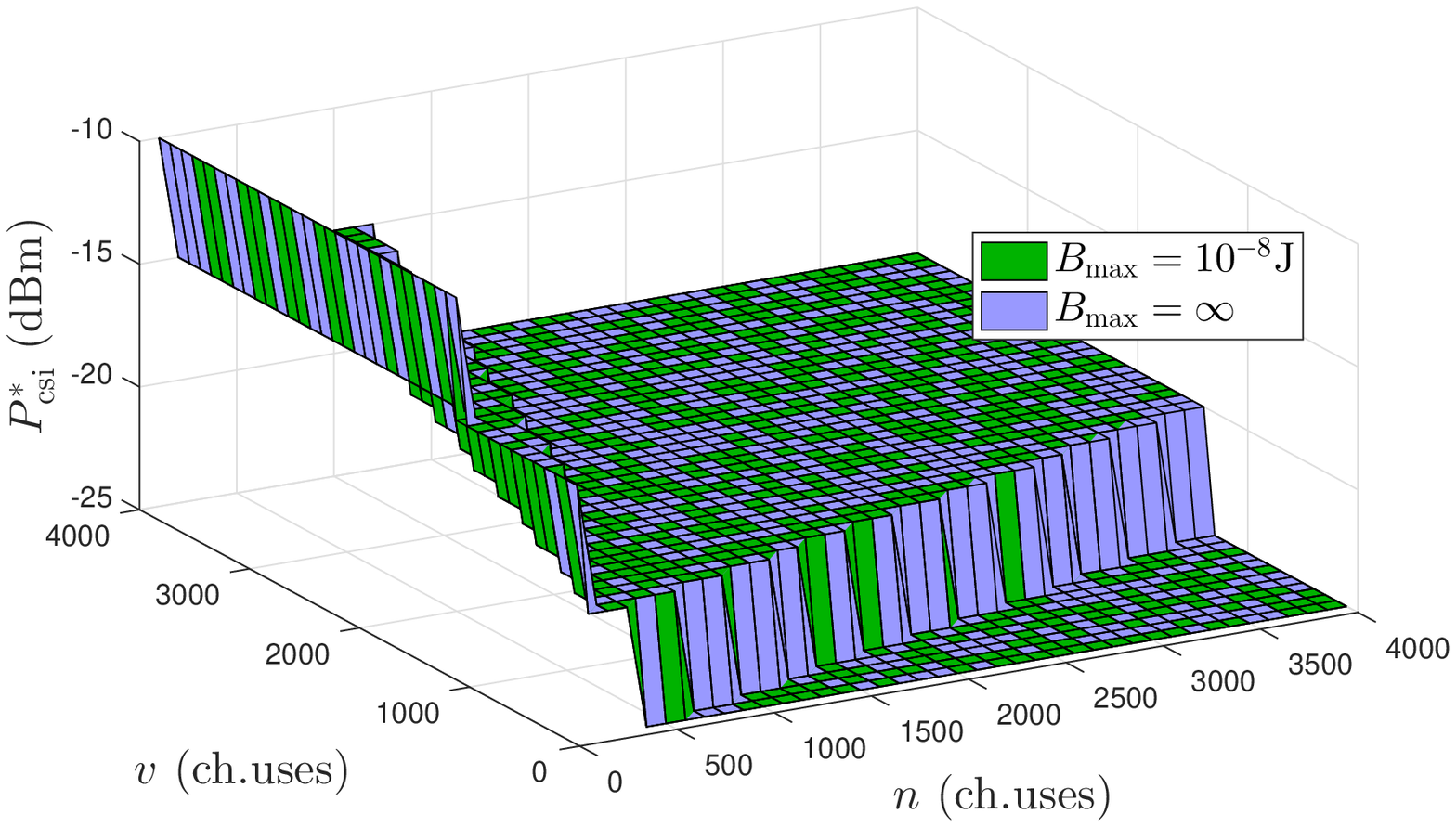}
		\caption{\label{Fig8a}}
	\end{subfigure}    
	\par\bigskip
	\begin{subfigure}[b]{0.45\textwidth}
		\centering
		\includegraphics[width=0.95\textwidth]{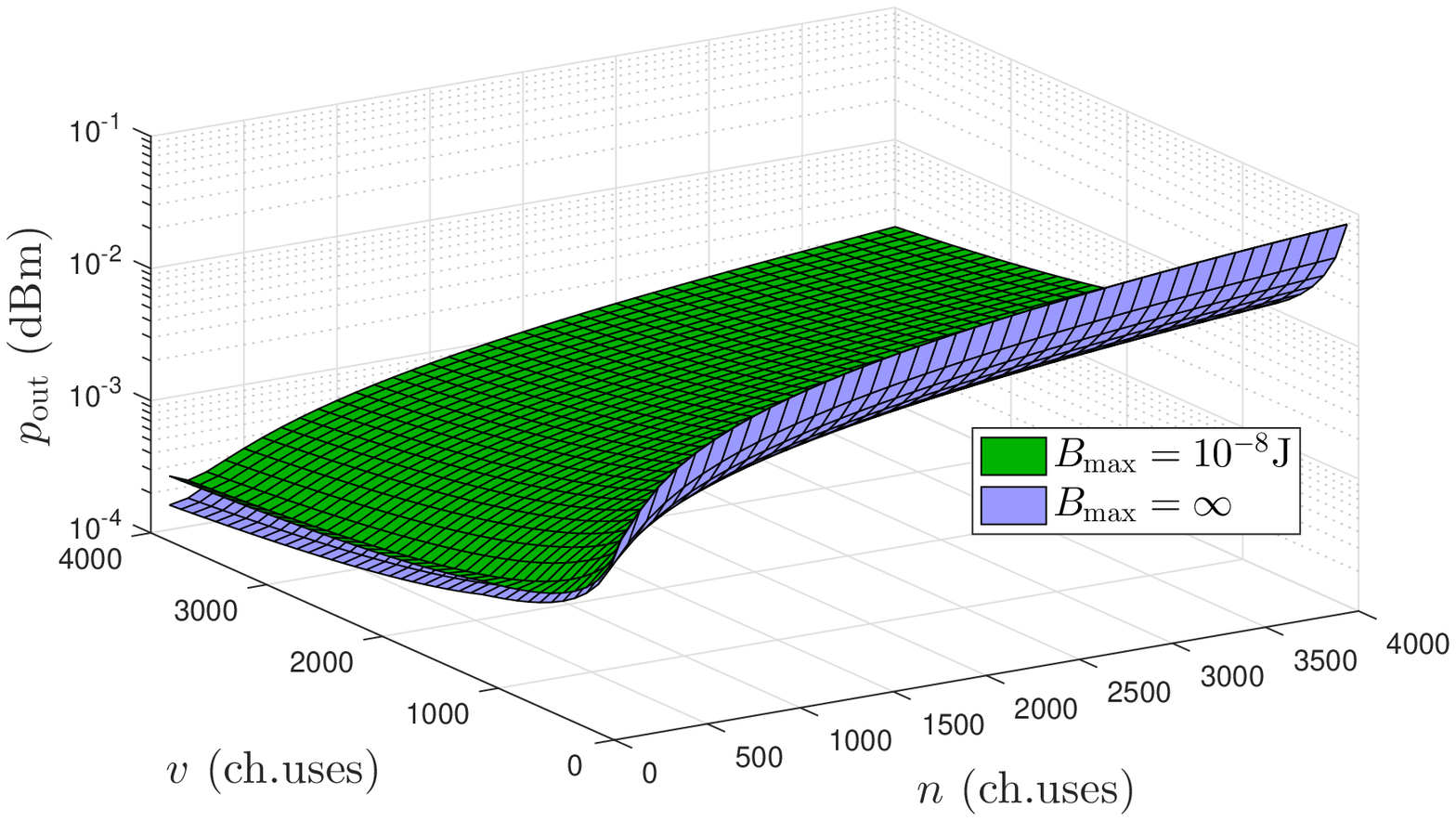}
		\caption{\label{Fig8b}}
	\end{subfigure}
	\caption{\small (\subref{Fig8a}) $P_{\mathrm{csi}}^*$ and (\subref{Fig8b}) $p_{\mathrm{out}}$, as functions of $n$ and $v$ with $u=1$ channel use.	}
	\label{Fig8}
	\vspace{-2mm}
\end{figure} 

Fig.~\ref{Fig8a} shows the optimum power for pilot transmission, $P_{\mathrm{csi}}^*=\arg\min\limits_{P_{\mathrm{csi}}}p_{\mathrm{out}}$, when $u=1$ channel use, and  Fig.~\ref{Fig8b} shows the outage probability reached for those values.	 
Interestingly, $P_{\mathrm{csi}}^*$ depends on the pair $(n,v)$. Configurations with a large $v$ also require a large $P_{\mathrm{csi}}$, and an even larger pilot transmission power is needed when $n$ is small. On the other hand, for large $n$ a small $P_{\mathrm{csi}}$ is adequate, which is evinced when $v$ is small. As shown in Section~\ref{S12}, the optimal performance occurs for relatively small and large $n$ and $v$, respectively, and from now on we use $-15$dBm as the pilot transmit power, which seems appropriate according to Fig.~\ref{Fig8}. Once again, a small performance gap can be noticed when comparing configurations with finite and infinite battery, reinforcing the convenience of the analysis with infinite battery assumption for these scenarios.

\subsubsection{On the performance of the cooperative scheme}\label{prev}
In this subsection we analyze scenarios where devices are assumed to be equipped with infinite battery. Fig.~\ref{Fig9a} shows the minimum delay required to deliver the messages while reaching the reliability constraints. Notice that
a reliability around $99.99\%$ ($\varepsilon_{_0}=10^{-4}$), is impossible to be reached without the relay assistance within the allowable delay (10ms). Among the different combination configurations at $D$, the SC scheme\footnote{Notice that $n\ge500$ channel uses since $n_1=0.2\cdot500=100$ channel uses, which is the lower bound to an accurate approximation when using \eqref{Q} as demonstrated in \cite[eq.(59)]{Yang.2014} for quasi-static fading channels and sufficiently large values of $n\ge100$.} with $n_1<n_2$ performs better for a given $n$. Obviously, the smaller $n_1$ is, the longer is the time that $S$ remains inactive, reducing the energy consumption of its circuits. Differently from the ideal case discussed in Section~\ref{S12}, $n^*<200$ channel uses and it is not shown in the figure. SC and MRC with $n_1=n_2$ perform similarly, while SC with $n_1>n_2$ forces a higher energy consumption of the $S$ circuits, remaining less energy resources for pilot and information transmission, although its performance still overcomes the DC case. Increasing $P_c$, the performance gap among the SC schemes with $n_1>n_2$, $n_1=n_2$ and $n_1<n_2$ also increases, since the impact of $n$ would be greater. As shown in Fig.~\ref{Fig9b}, the larger $n$ is, the smaller is the required fraction of time for WET, although a slight decrease is noticeable\footnote{Notice that in the ideal system scenario discussed in Section~\ref{S12}, that decrease is much pronounced.} due to the non-trivial value of $P_c$. In order to achieve a high reliability, the larger energy expenditure is required, thus $\beta$ gets closer to unity. Finally, the performance gap among the cooperative schemes with different blocklength in the broadcast and cooperation phase becomes more relevant when reliability constraints are more stringent.
\begin{figure}[t!] 
	\centering
	\begin{subfigure}[b]{0.45\textwidth}
		\centering
		\includegraphics[width=0.95\textwidth]{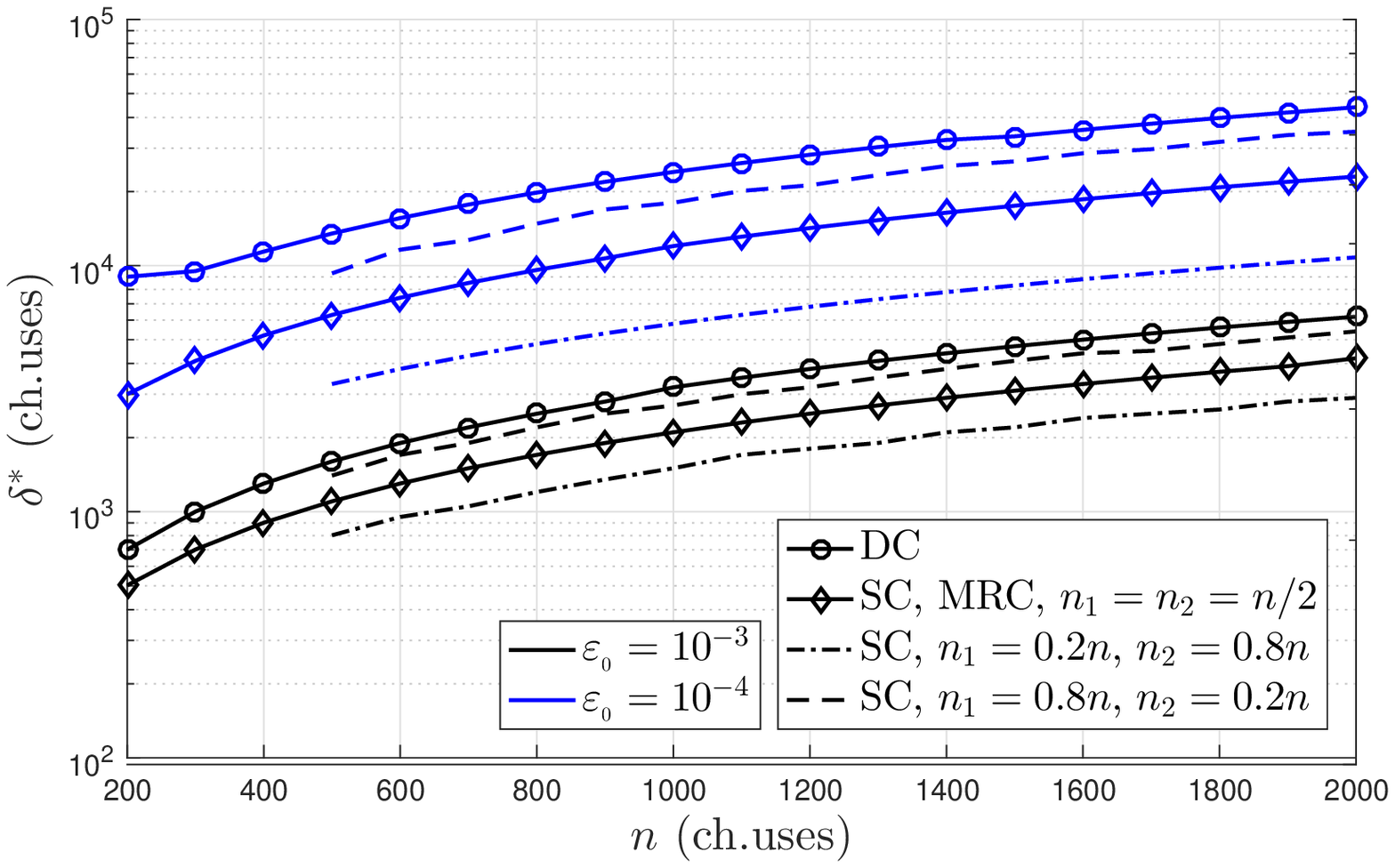}
		\caption{\label{Fig9a}}
	\end{subfigure}    
	\par\bigskip
	\begin{subfigure}[b]{0.45\textwidth}
		\centering
		\includegraphics[width=0.95\textwidth]{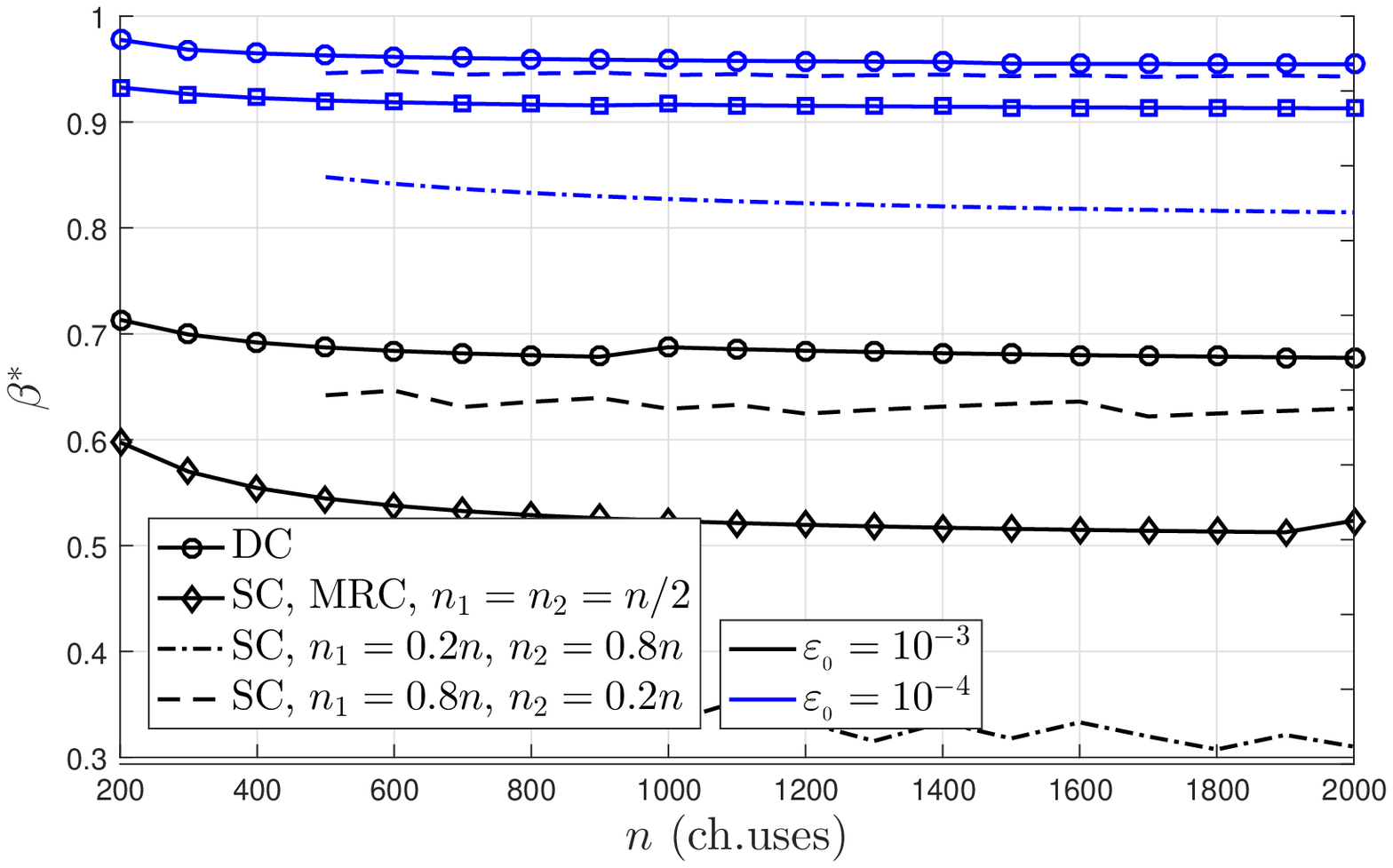}
		\caption{\label{Fig9b}}
	\end{subfigure}
	\caption{\small (\subref{Fig9a}) $\delta^*$,  (\subref{Fig9b}) $\beta^*$, as a function of $n$ with $B_{_{\mathrm{max}}}=\infty$.	}
	\label{Fig9}		
	\vspace{-2mm}
\end{figure}

The minimum reachable outage probability is shown in Fig.~\ref{Fig10a}, while the relative frequency of the outages due to the energy insufficiency at $S$ as a function of the message length is shown in Fig.~\ref{Fig10b}. We set $n=500$ channel uses. According to Fig.~\ref{Fig10a}, a delay of 10ms (5000 channel uses) when using the DC or cooperative schemes with $n_1\ge n_2$, is insufficient to reach a reliability around $99.99\%$. For those schemes, the required delay would have to be much higher to fulfill such requirement, specially for the DC case. With SC and $n_1=100$, $n_2=400$ channel uses, that requirement can be attended with a delay around $6.4$ms ($\sim3200$ channel uses). The convenience of choosing $n_1<n_2$ was already discussed above. On the other hand, a main conclusion coming from Fig.~\ref{Fig10b} is that increasing $k$ would not significantly degrade the performance for the cooperative schemes since the main cause of outage is due to energy insufficiency, meanwhile for the DC scheme, increasing $k$ up to $300$ bits could decrease the performance since the communication errors become almost $18\%$ of the outage events. 
\begin{figure}[t!] 
	\centering
	\begin{subfigure}[b]{0.45\textwidth}
		\centering
		\includegraphics[width=0.95\textwidth]{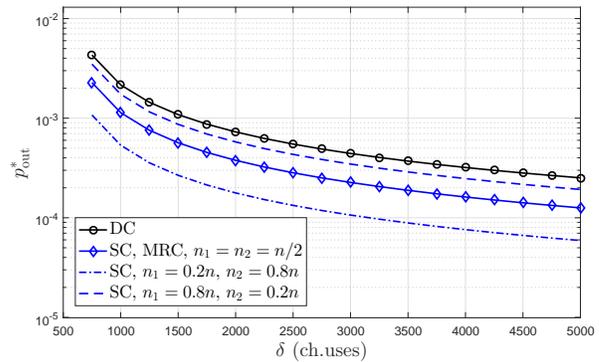}
		\caption{\label{Fig10a}}
	\end{subfigure}    
	\par\bigskip
	\begin{subfigure}[b]{0.45\textwidth}
		\centering
		\includegraphics[width=0.95\textwidth]{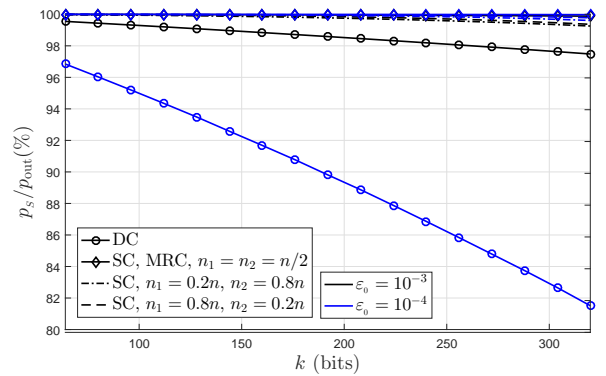}
		\caption{\label{Fig10b}}
	\end{subfigure}
	\vspace{-1mm}		
	\caption{\small (\subref{Fig10a}) Minimum reachable outage probability as a function of the delay, (\subref{Fig10b}) Relative frequency of outages due to energy insufficiency at $S$ as a function of the message length, with $P_{\mathrm{csi}}=-15$dBm, $B_{_{\mathrm{max}}}=\infty$, $u=1$ and $n=500$ channel uses.	}	
	\vspace{-2mm}
\end{figure} 
\section{Conclusion}\label{conclusions}
We evaluated a cooperative wireless-powered communication network at finite blocklength regime and with limited battery capacity. We modeled a realistic system where other power consumption sources beyond data transmission, such as circuit and baseband processing and pilot transmission, are taken into account along with imperfections on channel estimates. We also characterized the error probability for direct communication and cooperative schemes, while attaining closed-form expressions in ideal direct communication systems. 

Our results demonstrate that there is an optimum pilot transmit power for channel estimation, which depends on system parameters such as $n$, $v$, $P_{_D}$ and $k$. We show that, whenever more energy is harvested, the optimum pilot energy must be greater, and even more for small $n$. An important remark is that the energy devoted to the CSI acquisition, which depends on the power and utilized time, has to be taken into account due to the inherent energy and delay constraints of the discussed scenarios. Also, while infinite battery assumption is shown to be permissible for the scenarios under discussion since the energy harvested is very low; considering other power consumption sources beyond data transmission is crucial because they constitute a non-negligible cause of outage.
Moreover, we show that cooperation is important in order to reach URC-S requirements, while also illustrating the convenience of using a small blocklength during the first WIT phase, since $S$ stays active for less time and thus decreases its power consumption.
Finally, a greater flexibility regarding the latency target allows for reaching stringent reliability constraints and vice-versa. 
 
As a future work we intend to analyze multi-relay scenarios for short message communications, while proposing practical schemes. It could be also interesting to incorporate other strategies to these scenarios such as power allocation and/or HARQ.
\appendices 
\section{Proof of Theorem~\ref{Th1}}\label{App_A}
 Let $x=g_{_{DS}}$ and $y=g_{_{SD}}$ for shortening notation. Since the system is considered ideal, we have $p_s=0\rightarrow p_{\mathrm{out}}=\mathbb{E}[\epsilon(\gamma,k,n)]$, and $\gamma=\phi y\min(x,\lambda_{_S})$. We resort to the first order approximation of  $Q(f(\gamma))$, $f(\gamma)=\frac{C(\gamma)-r}{\sqrt{V(\gamma)/n}}$, given by \cite{Makki.2014,Makki.2016}
 \begin{eqnarray}\label{AP}
 Q(f(\gamma))\!\approx\!\Omega(\gamma)\!=\!\left\{\begin{array}{ll}
 1,&  \gamma\le \varrho\\
 \frac{1}{2}\!-\!\frac{\psi}{\sqrt{2\pi}}(\gamma\!-\!\theta)\!,\!&\!\varrho\!<\!\gamma\!<\!\vartheta\!\\
 0,& \gamma\ge \vartheta
 \end{array}
 \right.\!, 
 \end{eqnarray}
 where $\theta=2^{k/n}-1$, $\psi=\sqrt{\frac{n}{2\pi}}(2^{2k/n}-1)^{-\tfrac{1}{2}}$, $\varrho=\theta-\tfrac{1}{\psi}\sqrt{\tfrac{\pi}{2}}$ and $\vartheta=\theta+\tfrac{1}{\psi}\sqrt{\tfrac{\pi}{2}}$.
\subsection{Derivation of \eqref{pout_fin}}
The expression in \eqref{AP} can be reformulated using $\gamma=\phi y\min(x,\lambda_{_S})$ as follows
\begin{align}\label{AP_A1}
&\Omega(\phi y\min(x,\lambda_{_S}))\nonumber\\
&\!=\!\left\{ \begin{array}{ll}
1,&  y\!\le\! \frac{\varrho}{x\phi},\ x<\lambda_{_S}\\
1,&  y\!\l\!e \frac{\varrho}{\lambda_{_S}\phi},\ x\ge\lambda_{_S}\\
\frac{1}{2}\!-\!\frac{\psi}{\sqrt{2\pi}}(\phi xy\!-\!\theta),&   \frac{\varrho}{x\phi}\!<\!y\!<\!\frac{\vartheta}{x\phi},\ x\!<\!\lambda_{_S}\\
\frac{1}{2}\!-\!\frac{\psi}{\sqrt{2\pi}}(\phi\lambda_{_S} y\!-\!\theta),&   \frac{\varrho}{\lambda_{_S}\phi}\!<\!y\!<\!\frac{\vartheta}{\lambda_{_S}\phi},\ x\!\ge\!\lambda_{_S}\\
0,& \mathrm{otherwise}
\end{array}\!
\right.\!.
\end{align}
Now, by substituting  \eqref{AP_A1} into \eqref{pout_fin} since $Q(f(\phi y\min(x,\lambda_{_S})))\approx \Omega(\phi y\min(x,\lambda_{_S}))$, we get

\begin{align}
p_{\mathrm{out}}\approx& \int\limits_{0}^{\lambda_{_S}}\int\limits_{0}^{\frac{\varrho}{x\phi}}e^{-x-y}\mathrm{d}y\mathrm{d}x+\int\limits_{\lambda_{_S}}^{\infty}\int\limits_{0}^{\frac{\varrho}{\lambda_{_S}\phi}}e^{-x-y}\mathrm{d}y\mathrm{d}x+\nonumber\\
&+\int\limits_{0}^{\lambda_{_S}}\int\limits_{\frac{\varrho}{x\phi}}^{\frac{\vartheta}{x\phi}}\bigg(\frac{1}{2}-\frac{\psi}{\sqrt{2\pi}}(\phi xy-\theta)\bigg)e^{-x-y}\mathrm{d}y\mathrm{d}x+\nonumber\\
&+\int\limits_{\lambda_{_S}}^{\infty}\int\limits_{\frac{\varrho}{\lambda_{_S}\phi}}^{\frac{\vartheta}{\lambda_{_S}\phi}}\bigg(\frac{1}{2}-\frac{\psi}{\sqrt{2\pi}}(\phi \lambda_{_S} y-\theta)\bigg)e^{-y-x}\mathrm{d}y\mathrm{d}x\nonumber\\
&\approx I_1+I_2+I_3+I_4,
\label{AP_A12}
\end{align}
where $I_j$ is the $j-$th adding integral with $j=1,2,3,4$, and each of one has to be solved to find a closed-form expression. 
To be able to do so we first attain the following results
\begin{align}
\int\limits_{0}^{c}e^{-x-\frac{d}{x}}&\mathrm{d}x\!\stackrel{(a)}{=}\!\int\limits_{0}^{c}\!\sum_{i=0}^{\infty}\frac{(-1)^ix^i}{i!}e^{-\frac{d}{x}}\mathrm{d}x\!=\!\sum_{i=0}^{\infty}\frac{(-1)^i}{i!}\!\int\limits_{0}^{c}\!x^ie^{-\frac{d}{x}}\mathrm{d}x\nonumber\\
\!\stackrel{(b)}{=}\sum_{i=0}^{\infty}\!&\frac{(-1)^ix^{i+1}\mathrm{Ei}(i+2,\frac{d}{x})}{i!}\bigg|_0^{c}\nonumber\\
=\!\sum_{i=0}^{\infty}\!&\frac{(\!-\!1)^ic^{i\!+\!1}\mathrm{Ei}(i\!+\!2,\frac{d}{c})}{i!}\!-\!\lim\limits_{x\rightarrow 0}\sum_{i=0}^{\infty}\!\frac{(-1)^ix^{i\!+\!1}\mathrm{Ei}(i\!+\!2,\frac{d}{x})}{i!}\nonumber\\
=\!\sum_{i=0}^{\infty}\!&\frac{(-1)^ic^{i+1}\mathrm{Ei}(i+2,\frac{d}{c})}{i!},
\label{Ip1}\\
\int\limits_{0}^{c}\!xe^{-\!x-\!\frac{d}{x}}&\mathrm{d}x\!\stackrel{(a)}{=}\!\int\limits_{0}^{c}\!\sum_{i=0}^{\infty}\!\frac{(\!-\!1)^ix^{i\!+\!1}}{i!}e^{\!-\!\frac{d}{x}}\mathrm{d}x\!=\!\sum_{i=0}^{\infty}\!\frac{(\!-\!1)^i}{i!}\!\int\limits_{0}^{c}\!x^{i\!+\!1}e^{\!-\!\frac{d}{x}}\mathrm{d}x\nonumber\\
\!\stackrel{(b)}{=}\sum_{i=0}^{\infty}\!&\frac{(-1)^ix^{i+2}\mathrm{Ei}(i+3,\frac{d}{x})}{i!}\bigg|_0^{c}\nonumber\\
=\!\sum_{i=0}^{\infty}\!&\frac{(-1)^ic^{i+2}\mathrm{Ei}(i\!+\!3,\frac{d}{c})}{i!}\!-\!\lim\limits_{x\rightarrow 0}\!\sum_{i=0}^{\infty}\!\frac{(-1)^ix^{i\!+\!2}\mathrm{Ei}(i\!+\!3,\frac{d}{x})}{i!}\nonumber\\
=\!\sum_{i=0}^{\infty}\!&\frac{(-1)^ic^{i+2}\mathrm{Ei}(i+3,\frac{d}{c})}{i!},
\label{Ip2}
\end{align}
where $(a)$ comes from the Taylor series expansion of $e^{-x}$ and $(b)$ comes from the definition of the exponential integral  \cite{Jeffrey.2007}. 
Now we proceed solving each $I_j$ integral as shown in \eqref{I1}, \eqref{I2}, \eqref{I3}, \eqref{I4}. The last two are on the top of the next page and the final expressions for $I_1$ and $I_2$ are obtained from using \eqref{Ip1} and \eqref{Ip2} as
\begin{align}
I_1&\!=\!\!\!\int\limits_{0}^{\lambda_{_S}}\!\int\limits_{0}^{\frac{\varrho}{x\phi}}\!e^{\!-\!x\!-\!y}\mathrm{d}y\mathrm{d}x\!=\!\!\!\int\limits_{0}^{\lambda_{_S}}\!\!(e^{\!-\!x}\!-\!e^{\!-\!x\!-\!\frac{\varrho}{x\phi}})\mathrm{d}x\!=\!1\!-\!e^{\!-\!\lambda_{_S}}\!\!-\!\!\int\limits_{0}^{\lambda_{_S}}\!\!e^{\!-\!x\!-\!\frac{\varrho}{x\phi}}\mathrm{d}x\nonumber\label{I1}\\
&=1-e^{-\lambda_{_S}}-\sum_{i=0}^{\infty}\frac{(-1)^i\lambda_{_S}^{i+1}\mathrm{Ei}(i+2,\frac{\varrho}{\lambda_{_S}\phi})}{i!},\\
I_2&=\int\limits_{\lambda_{_S}}^{\infty}\int\limits_{0}^{\frac{\varrho}{\lambda_{_S}\phi}}e^{-x-y}\mathrm{d}y\mathrm{d}x=\int\limits_{\lambda_{_S}}^{\infty}(1-e^{-\frac{\varrho}{\lambda_{_S}\phi}})e^{-x}\mathrm{d}x\nonumber\\
&=(1-e^{-\frac{\varrho}{\lambda_{_S}\phi}})e^{-\lambda_{_S}}=e^{-\lambda_{_S}}-e^{-\lambda_{_S}-\frac{\varrho}{\lambda_{_S}\phi}}.\label{I2}
\end{align}
\begin{figure*}[!t]
\begin{align}
I_3&=\int\limits_{0}^{\lambda_{_S}}\int\limits_{\frac{\varrho}{x\phi}}^{\frac{\vartheta}{x\phi}}\bigg(\frac{1}{2}-\frac{\psi}{\sqrt{2\pi}}(\phi xy-\theta)\bigg)e^{-x-y}\mathrm{d}y\mathrm{d}x=\!\bigg(\!\frac{1}{2}\!+\!\frac{\psi\theta}{\sqrt{2\pi}}\!\bigg)\!\int\limits_{0}^{\lambda_{_S}}\!\int\limits_{\frac{\varrho}{x\phi}}^{\frac{\vartheta}{x\phi}}\!e^{-x-y}\mathrm{d}y\mathrm{d}x\!-\!\frac{\psi\phi}{\sqrt{2\pi}}\!\int\limits_{0}^{\lambda_{_S}}\!\int\limits_{\frac{\varrho}{x\phi}}^{\frac{\vartheta}{x\phi}}\!xye^{-x-y}\mathrm{d}y\mathrm{d}x\nonumber\\
&\!=\!\bigg(\!\frac{1}{2}\!+\!\frac{\psi\theta}{\sqrt{2\pi}}\!\bigg)\!\int\limits_0^{\lambda_{_S}}\!\big(e^{\!-\!x\!-\!\frac{\varrho}{\phi x}}\!-\!e^{\!-\!x\!-\!\frac{\vartheta}{\phi x}}\big)\mathrm{d}x\!-\!\frac{\psi\phi}{\sqrt{2\pi}}\!\int\limits_0^{\lambda_{_S}}\!xe^{\!-\!x\!-\!\frac{\varrho}{\phi x}}\mathrm{d}x\!-\!\frac{\psi\varrho}{\sqrt{2\pi}}\!\int\limits_{0}^{\lambda_{_S}}\!e^{\!-\!x\!-\!\frac{\varrho}{\phi x}}\mathrm{d}x\!+\!\frac{\psi\phi}{\sqrt{2\pi}}\!\int\limits_0^{\lambda_{_S}}\!xe^{\!-\!x\!-\!\frac{\vartheta}{\phi x}}\mathrm{d}x\!+\!\frac{\psi\vartheta}{\sqrt{2\pi}}\!\int\limits_{0}^{\lambda_{_S}}\!e^{\!-\!x\!-\!\frac{\vartheta}{\phi x}}\mathrm{d}x\nonumber\\
&\!\!=\!\sum_{i\!=\!0}^{\infty}\!\frac{\!(\!-\!1)^i\lambda_{_S}^{i\!+\!1}}{i!}\!\!\Bigg[\!\bigg(\!\frac{1}{2}\!+\!\frac{\psi(\theta\!-\!\varrho)}{\sqrt{2\pi}}\!\bigg)\!\mathrm{Ei}\!\big(i\!+\!2,\frac{\varrho}{\lambda_{_S}\!\phi}\!\big)\!-\!\bigg(\!\frac{1}{2}\!+\!\frac{\psi\!(\theta\!-\!\vartheta)}{\sqrt{2\pi}}\!\bigg)\!\mathrm{Ei}\!\big(i\!+\!2,\frac{\vartheta}{\lambda_{_S}\!\phi}\!\big)\!+\!\frac{\psi\phi\!\lambda_{_S}}{\sqrt{2\pi}}\!\bigg(\!\!\mathrm{Ei}\!\big(i\!+\!3,\frac{\vartheta}{\lambda_{_S}\!\phi}\!\big)\!-\!\mathrm{Ei}\!\big(i\!+\!3,\frac{\varrho}{\lambda_{_S}\!\phi}\!\big)\!\bigg)\!\Bigg]\!\label{I3}\\
I_4&=\int\limits_{\lambda_{_S}}^{\infty}\int\limits_{\frac{\varrho}{\lambda_{_S}\phi}}^{\frac{\vartheta}{\lambda_{_S}\phi}}\bigg(\frac{1}{2}-\frac{\psi}{\sqrt{2\pi}}(\phi \lambda_{_S} y-\theta)\bigg)e^{-x-y}\mathrm{d}y\mathrm{d}x=\!\bigg(\!\frac{1}{2}\!+\!\frac{\psi\theta}{\sqrt{2\pi}}\!\bigg)\!\int\limits_{\lambda_{_S}}^{\infty}\!\int\limits_{\frac{\varrho}{\lambda_{_S}\phi}}^{\frac{\vartheta}{\lambda_{_S}\phi}}\!e^{\!-\!x\!-\!y}\mathrm{d}y\mathrm{d}x\!-\!\frac{\psi\phi\lambda_{_S}}{\sqrt{2\pi}}\!\int\limits_{\lambda_{_S}}^{\infty}\!\int\limits_{\frac{\varrho}{\lambda_{_S}\phi}}^{\frac{\vartheta}{\lambda_{_S}\phi}}\!ye^{-x-y}\mathrm{d}y\mathrm{d}x\nonumber\\
&=\bigg(\frac{1}{2}+\frac{\psi\theta}{\sqrt{2\pi}}\bigg)(e^{-\frac{\varrho}{\lambda_{_S}\phi}}-e^{-\frac{\vartheta}{\lambda_{_S}\phi}})e^{-\lambda_{_S}}+\frac{\psi\phi\lambda_{_S}}{\sqrt{2\pi}}\bigg(\big(\frac{\vartheta}{\lambda_{_S}\phi}+1\big)e^{-\frac{\vartheta}{\lambda_{_S}\phi}}-\big(\frac{\varrho}{\lambda_{_S}\phi}+1\big)e^{-\frac{\varrho}{\lambda_{_S}\phi}}\bigg)e^{-\lambda_{_S}}\nonumber\\
&=\bigg(\frac{1}{2}+\frac{\psi}{\sqrt{2\pi}}(\theta-\varrho-\lambda_{_S}\phi)\bigg)e^{-\lambda_{_S}-\frac{\varrho}{\lambda_{_S}\phi}}-\bigg(\frac{1}{2}+\frac{\psi}{\sqrt{2\pi}}(\theta-\vartheta-\lambda_{_S}\phi)\bigg)e^{-\lambda_{_S}-\frac{\vartheta}{\lambda_{_S}\phi}}\label{I4}
\end{align}
\hrule
\end{figure*}
Notice that the representations in series in \eqref{I1} and \eqref{I3} are convergent since all the results come from the Taylor series expansion of $e^{-f(x)}$. Now, substituting \eqref{I1}, \eqref{I2}, \eqref{I3}, \eqref{I4} into \eqref{AP_A12}, and by selecting a finite number of terms in the summation, $i_M+1 $, we attain the approximation in \eqref{pout_fin}. \hfill \qedsymbol
\subsection{Derivation of \eqref{pout_inf}}
Let $z=x\cdot y$, which according to \cite[Eq. 5]{Lopez2.2017} with $m=1$, has PDF given by
\begin{align} 
\label{fz}
f_Z(z) &= 2 \operatorname{K}_0(2\sqrt{z}),\ z>0.
\end{align}
Now, working on \eqref{pout}, which is equal to \eqref{pout_inf} since $p_s=\varphi=0$ from the ideal system assumption, we get
\begin{align}\label{AP0}
p_{_{\mathrm{out}}}&\approx\int\limits_0^{\infty}Q\bigg(\frac{C(\phi z)-k/n}{V(\phi z)/n}\bigg)f_Z(z)\mathrm{d}z 
\nonumber\\
&=2\int\limits_0^{\infty}Q\bigg(\frac{C(\phi z)-k/n}{V(\phi z)/n}\bigg)\operatorname{K}_0(2\sqrt{z})\mathrm{d}z.
\end{align}
Using $Q(f(\phi z))\!\approx\!\Omega(\phi z)\!$ given in \eqref{AP}, \eqref{AP0} approximates to 
\begin{align}
&p_{_{\mathrm{out}}}\approx 2\int_{0}^{\infty}\operatorname{K}_0(2\sqrt{z})\Omega(\phi z)\mathrm{d}z \approx 2\int\limits_{0}^{\varrho/\phi}\operatorname{K}_0(2\sqrt{z})\mathrm{d}z\nonumber\\
&\!\approx\! 2I_5(z)\bigg|_{z\!=\!0}^{z\!=\!\tfrac{\varrho}{\phi}}\!\!\!+\!\biggl(1\!+\!\frac{2\psi\theta}{\sqrt{2\pi}}\biggl)I_5(z)\bigg|_{z\!=\!\tfrac{\varrho}{\phi}}^{z\!=\!\tfrac{\vartheta}{\phi}}\!\!-\!\frac{2\psi \phi}{\sqrt{2\pi}}I_6(z)\bigg|_{z\!=\!\tfrac{\varrho}{\phi}}^{z\!=\!\tfrac{\vartheta}{\phi}},\label{AP1}
\end{align}
where $I_5(z)=\int \operatorname{K}_0(2\sqrt{z})\mathrm{d}z$ and $I_6(z)=\int z\operatorname{K}_0(2\sqrt{z})\mathrm{d}z$ are solved next
\begin{align}\label{Bessel1}
I_5(z)&=\int\! \operatorname{K}_0(2\sqrt{z})\mathrm{d}z\stackrel{(a)}{=}\frac{1}{2}\int\!
q\operatorname{K}_0(q)\mathrm{d}q\stackrel{(b)}{=}-\frac{1}{2}q\operatorname{K}_1(q)+\varsigma\nonumber\\
&\stackrel{(c)}{=}-\sqrt{z}\operatorname{K}_1(2\sqrt{z})+\varsigma,
\end{align}
\begin{align}\label{Bessel2}
&I_6(z)\!\!=\!\!\!\int\!\!\! z\operatorname{K}_0(2\sqrt{z})\mathrm{d}z\!\!\!\stackrel{(d)}{=}\!\!\!\int\!\!\frac{q^3}{8}\!\operatorname{K}_0(q)\mathrm{d}q\!\stackrel{(e)}{=}\!\!\! \int\!\!\!\left(\!\frac{q^3}{8}\!\operatorname{K}_2(q)\!-\!\frac{q^2}{4}\!\operatorname{K}_1(q)\!\!\right)\!\!\mathrm{d}q \nonumber \\ 
&\quad\quad\!\stackrel{(f)}{=}-\frac{q^3}{8}\operatorname{K}_3\!(q)+\frac{1}{4}q^2\operatorname{K}_2(q)+\varsigma\nonumber\\
&\!\stackrel{(g)}{=}\!\!-\!\frac{q^3}{8}
\left(\!\frac{8}{q^2}\!\operatorname{K}_1\!(q)\!+\!\frac{4}{q}\!\operatorname{K}_0\!(q)\!+\!\operatorname{K}_1\!(q)\!\right)\! \!+\!\frac{q^2}{4}\!\left(\!\frac{2}{q}\!\operatorname{K}_1\!(q)\!+\!\operatorname{K}_0\!(q)\!\right)\!+\!\varsigma \nonumber\\
&\quad\quad\!\stackrel{(h)}{=}-\frac{q}{2}\bigg(\frac{q^2}{4}+1\bigg)\operatorname{K}_1(q)-\frac{q^2}{4}\operatorname{K}_0(q)+\varsigma\nonumber\\
&\quad\quad\stackrel{(i)}{=} \sqrt{z}(z+1)\operatorname{K}_1(2\sqrt{z}) \!-\! z\operatorname{K}_0(2\sqrt{z})\!+\!\varsigma,
\end{align}
where $(a)$ comes from the transformation $q=2\sqrt{z}$, $(b)$ is based on $\int q^{t+1}\operatorname{K}_t(q)\mathrm{d}q=-q^{t+1}\operatorname{K}_{t+1}(q)$ \cite{Kreh.2012}, in $(c)$ and $(d)$ we reapply the same property as in $(a)$, $(e)$ 
is reached by using  $\operatorname{K}_{t+1}(q)-\operatorname{K}_{t-1}(q)=\frac{2t}{q}\operatorname{K}_t(q)$\cite{Kreh.2012}, in $(f)$ and $(g)$ we reapply the same properties as in $(b)$ and $(e)$ respectively, obtaining $(h)$ is straightforward through algebraic transformations, $(i)$ comes from using property in $(a)$ again, and $\varsigma$ is a constant.

Substituting \eqref{Bessel1} and \eqref{Bessel2} into \eqref{AP1} yields 
\begin{align}\label{eq:outage_deriv2}
\footnotesize
p_{_{\mathrm{out}}}&\approx- 2\sqrt{z}\operatorname{K}_1(2\sqrt{z})\bigl|_0^{\tfrac{\varrho}{\phi}}-
\left(1+\tfrac{2\psi\theta}{\sqrt{2\pi}}\right) \sqrt{z}\operatorname{K}_1(2\sqrt{z})\bigl|_{\tfrac{\varrho}{\phi}}^{\tfrac{\vartheta}{\phi}}+\nonumber\\
&\quad\quad 
+\frac{2\psi \phi}{\sqrt{2\pi}}\left[\sqrt{z}(z+1)\operatorname{K}_1(2\sqrt{z})+z\operatorname{K}_0 (2\sqrt{z})\right]\Bigl|_{\tfrac{\varrho}{\phi}}^{\tfrac{\vartheta}{\phi}} \nonumber \\
&\!\approx\!1\!-\!\sqrt{\tfrac{\varrho}{\phi}}\bigg[1\!+\!\Big(\tfrac{2\psi}{\sqrt{2\pi}}\Big)\big(\varrho\!+\!\phi\!-\!\theta\big)\bigg]K_1\Big(2\sqrt{\tfrac{\varrho}{\phi}}\Big)+\nonumber\\
&\ \ \  
-\!\tfrac{2\psi\varrho}{\sqrt{2\pi}}K_0\Big(2\sqrt{\tfrac{\varrho}{\phi}}\Big)\!-\!\sqrt{\tfrac{\vartheta}{\phi}}\bigg[1\!-\!\Big(\tfrac{2\psi}{\sqrt{2\pi}}\Big)\big(\vartheta\!+\!\phi\!-\!\theta\big)\bigg]
\cdot\nonumber\\
&\quad\quad \cdot 
K_1\Big(2\sqrt{\tfrac{\vartheta}{\phi}}\Big)\!+\!\tfrac{2\psi\vartheta}{\sqrt{2\pi}}K_0\Big(2\sqrt{\tfrac{\vartheta}{\phi}}\Big),
\end{align}
where the last equality comes from $2 \lim\limits_{z\rightarrow 0}\sqrt{z}\operatorname{K}_1(2\sqrt{z})=2\frac{1}{2}=1$ (easily checked by doing the series expansion at $z=0$). \hfill \qedsymbol

\bibliographystyle{IEEEtran}
\bibliography{IEEEabrv,references}

\end{document}